\newtheorem{observation}{Observation}
\newtheorem{myclaim}{Claim}
\def\eps{\varepsilon}
\def\Diam{D}
\def\eps{\varepsilon}
\newcommand{\opt}{{\sc Optimal}}
\newcommand{\rightm}{{\sc AlmostRightmost}}
\newcommand{\rand}{{\sc Random}}
\def\Exp{\mathbb{E}}
\def\Prob{\mathbb{P}\mathrm{r}}
\def\reals{\mathbb{R}}
\def\SC{cost}
\begin{document}

\title{Strategyproof Facility Location in Perturbation Stable Instances%
\thanks{This work was supported by the Hellenic Foundation for Research and Innovation (H.F.R.I.) under the ``First Call for H.F.R.I. Research Projects to support Faculty members and Researchers' and the procurement of high-cost research equipment grant'', project BALSAM, HFRI-FM17-1424
}}

\author{Dimitris Fotakis  \and
Panagiotis Patsilinakos}

\authorrunning{Dimitris Fotakis and Panagiotis Patsilinakos}
%
\institute{School of Electrical and Computer Engineering\\ National Technical University of Athens, 15780 Athens, Greece\\[4pt]
\email{fotakis@cs.ntua.gr}, \email{patsilinak@corelab.ntua.gr}}

\maketitle

\begin{abstract}
We consider $k$-Facility Location games, where $n$ strategic agents report their locations on the real line, and a mechanism maps them to $k \geq 2$ facilities. Each agent seeks to minimize her distance to the nearest facility. We are interested in (deterministic or randomized) strategyproof mechanisms without payments that achieve a reasonable approximation ratio to the optimal social cost of the agents. To circumvent the inapproximability of $k$-Facility Location by deterministic strategyproof mechanisms, we restrict our attention to perturbation stable instances. An instance of $k$-Facility Location on the line is \emph{$\gamma$-perturbation stable} (or simply, \emph{$\gamma$-stable}), for some $\gamma \geq 1$, if the optimal agent clustering is not affected by moving any subset of consecutive agent locations closer to each other by a factor at most $\gamma$. We show that the optimal solution is strategyproof in $(2+\sqrt{3})$-stable instances whose optimal solution does not include any singleton clusters, and that allocating the facility to the agent next to the rightmost one in each optimal cluster (or to the unique agent, for singleton clusters) is strategyproof and $(n-2)/2$-approximate for $5$-stable instances (even if their optimal solution includes singleton clusters). On the negative side, we show that for any $k \geq 3$ and any $\delta > 0$, there is no deterministic anonymous mechanism that achieves a bounded approximation ratio and is strategyproof in $(\sqrt{2}-\delta)$-stable instances. We also prove that allocating the facility to a random agent of each optimal cluster is strategyproof and $2$-approximate in $5$-stable instances. To the best of our knowledge, this is the first time that the existence of deterministic (resp. randomized) strategyproof mechanisms with a bounded (resp. constant) approximation ratio is shown for a large and natural class of $k$-Facility Location instances.
\end{abstract}


\pagestyle{plain}

\section{Introduction}
\label{s:intro}

We consider \emph{$k$-Facility Location games}, where $k \geq 2$ facilities are placed on the real line based on the preferences of $n$ strategic agents. Such problems are motivated by natural scenarios in Social Choice, where a local  authority plans to build a fixed number of public facilities in an area (see e.g., \cite{Miy01}). The choice of the locations is based on the preferences of local people, or \emph{agents}. Each agent reports her ideal location, and the local authority applies a (deterministic or randomized) \emph{mechanism} that maps the agents' preferences to $k$ facility locations.

Each agent evaluates the mechanism's outcome according to her \emph{connection cost}, i.e., the distance of her ideal location to the nearest facility. The agents seek to minimize their connection cost and may misreport their ideal locations in an attempt of manipulating the mechanism. Therefore, the mechanism should be \emph{strategyproof}, i.e., it should ensure that no agent can benefit from misreporting her location, or even \emph{group strategyproof}, i.e., resistant to coalitional manipulations. 
The local authority's objective is to minimize the \emph{social cost}, namely the sum of agent connections costs. In addition to allocating the facilities in a incentive compatible way, which is formalized by (group) strategyproofness, the mechanism should result in a socially desirable outcome, which is quantified by the mechanism's approximation ratio to the optimal social cost. 

Since Procaccia and Tennenholtz \cite{PT09} initiated the research agenda of \emph{approximate mechanism design without money}, $k$-Facility Location has served as the benchmark problem in the area and its approximability by deterministic or randomized strategyproof mechanisms has been studied extensively in virtually all possible variants and generalizations. For instance, previous work has considered multiple facilities on the line (see e.g., \cite{FT12,FT13,GT17,LWZ09,NST12}) and in general metric spaces \cite{FT10,LSWZ10}), different objectives (e.g., social cost, maximum cost, the $L_2$ norm of agent connection costs \cite{FW11,PT09,FT13}), restricted metric spaces more general than the line (cycle, plane, trees, see e.g., \cite{AFPT09,DFMN12,FM21,GH20,Meir19}), facilities that serve different purposes (see e.g., \cite{KVZ19,LiLYZ20,SV16}), and different notions of private information about the agent preferences that should be declared to the mechanism (see e.g., \cite{ChenFLWYZ20,FLSWZ20,MeiLYZ19} and the references therein). 

Due to the significant research interest in the topic, the fundamental and most basic question of approximating the optimal social cost by strategyproof mechanisms for $k$-Facility Location on the line has been relatively well-understood. For a single facility ($k=1$), placing the facility at the median location is group strategyproof and optimizes the social cost. For two facilities ($k=2$), the best possible approximation ratio is $n-2$ and is achieved by a natural group strategyproof mechanism that places the facilities at the leftmost and the rightmost location \cite{FT12,PT09}. However, for three or more facilities ($k \geq 3$), there do not exist any deterministic anonymous%
\footnote{A mechanism is \emph{anonymous} if its outcome depends only on the agent locations, not on their identities.}
strategyproof mechanisms for $k$-Facility Location with a bounded (in terms of $n$ and $k$) approximation ratio \cite{FT12}. On the positive side, there is a randomized anonymous group strategyproof mechanism%
\footnote{The result of \cite{FT13} applies to the more general setting where the agent connection cost is a nondecreasing concave function of the distance to the nearest facility.}
with an approximation ratio of $n$ \cite{FT13} (see also Section~\ref{s:previous} for a selective list of additional references).

\smallskip\noindent{\bf\boldmath Perturbation Stability in $k$-Facility Location Games.}
Our work aims to circumvent the strong impossibility result of \cite{FT12} and is motivated by the recent success on the design of polynomial-time exact algorithms for perturbation stable clustering instances (see e.g., \cite{AMM17,BHW16,BBG13,BL16,Rough19,Rough20}). An instance of a clustering problem, like $k$-Facility Location (a.k.a. $k$-median in the optimization and approximation algorithms literature), is \emph{$\gamma$-perturbation stable} (or simply, \emph{$\gamma$-stable}), for some $\gamma \geq 1$, if the optimal clustering is not affected by scaling down any subset of the entries of the distance matrix by a factor at most $\gamma$. Perturbation stability was introduced by Bilu and Linial \cite{BL10} and Awasthi, Blum and Sheffet \cite{ABS12} (and has motivated a significant volume of followup work since then, see e.g., \cite{AMM17,BHW16,BL16,Rough20} and the references therein) in an attempt to obtain a theoretical understanding of the superior practical performance of relatively simple clustering algorithms for well known $\mathrm{NP}$-hard clustering problems (such as $k$-Facility Location in general metric spaces). Intuitively, the optimal clusters of a $\gamma$-stable instance are somehow well separated, and thus, relatively easy to identify (see also the main properties of stable instances in Section~\ref{s:stability}). As a result, natural extensions of simple algorithms, like single-linkage (a.k.a. Kruskal's algorithm), can recover the optimal clustering in polynomial time, provided that $\gamma \geq 2$ \cite{AMM17}, and standard approaches, like dynamic programming (resp. local search), work in almost linear time for $\gamma > 2+\sqrt{3}$ (resp. $\gamma > 5$) \cite{ACM0W20}. 

In this work, we investigate whether restricting our attention to stable instances allows for improved strategyproof mechanisms with bounded (and ideally, constant) approximation guarantees for $k$-Facility Location on the line, with $k \geq 2$. We note that the impossibility results of \cite{FT12} crucially depend on the fact that the clustering (and the subsequent facility placement) produced by any deterministic mechanism with a bounded approximation ratio must be sensitive to location misreports by certain agents (see also Section~\ref{s:lower_bound}). Hence, it is very natural to investigate whether the restriction to $\gamma$-stable instances allows for some nontrivial approximation guarantees by deterministic or randomized strategyproof mechanisms for $k$-Facility Location on the line. 

To study the question above, we adapt to the real line the stricter%
\footnote{The notion of $\gamma$-metric stability is ``stricter'' than standard $\gamma$-stability in the sense that the former excludes some perturbations allowed by the latter. Hence, the class of $\gamma$-metric stable instances includes the class of $\gamma$-stable instances. More generally, the stricter a notion of stability is, the larger the class of instances qualified as stable, and the more general the positive results that one gets. Similarly, for any $\gamma' > \gamma \geq 1$, the class of $\gamma$(-metric) stable instances includes the class of $\gamma'$(-metric) instances. Hence, a smaller value of $\gamma$ makes a positive result stronger and more general.}
notion of \emph{$\gamma$-metric stability} \cite{AMM17}, where the definition also requires that the distances form a metric after the $\gamma$-perturbation. In our notion of \emph{linear $\gamma$-stability}, the instances should retain their linear structure after a $\gamma$-perturbation. Hence, a $\gamma$-perturbation of a linear $k$-Facility Location instance is obtained by moving any subset of pairs of consecutive agent locations closer to each other by a factor at most $\gamma \geq 1$. We say that a $k$-Facility Location instance is $\gamma$-stable, if the original instance and any $\gamma$-perturbation of it admit the same unique optimal clustering%
\footnote{As for the optimal centers, in case of ties, the center of an optimal cluster is determined by a fixed deterministic tie-breaking rule, e.g., the center is always the left median point of the cluster.} (see also Definition~\ref{def:stable}).

Interestingly, for $\gamma$ sufficiently large, $\gamma$-stable instances of $k$-Facility Location have additional structure that one could exploit towards the design of strategyproof mechanisms with good approximation guarantees (see also Section~\ref{s:stability}). E.g., each agent location is $\gamma-1$ times closer to the nearest facility than to any location in a different cluster (Proposition~\ref{pr:separation}). Moreover, for $\gamma \geq 2+\sqrt{3}$, the distance between any two consecutive clusters is larger than their diameter (Lemma~\ref{l:separated-clusters}). 

From a conceptual viewpoint, our work is motivated by a reasoning very similar to that discussed by Bilu, Daniely, Linial and Saks \cite{BDLS13} and summarized in  ``\emph{clustering is hard only when it doesn't matter}'' by Roughgarden~\cite{Rough17_lect6}. In a nutshell, we expect that when $k$ public facilities (such as schools, libraries, hospitals, representatives) are to be allocated to some communities (e.g., cities, villages or neighborhoods, as represented by the locations of agents on the real line) the communities are already well formed, relatively easy to identify and difficult to radically reshape by small distance perturbations or agent location misreports. Moreover, in natural practical applications of $k$-Facility Location games, agents tend to misreport ``locally'' (i.e., they tend to declare a different ideal location in their neighborhood, trying to manipulate the location of the local facility), which usually does not affect the cluster formation. In practice, this happens because the agents do not have enough knowledge about locations in other neighborhoods, and because ``large non-local'' misreports are usually easy to identify by combining publicly available information about the agents (e.g., occupation, address, habits, lifestyle). Hence, we believe that the class of $\gamma$-stable instances, especially for relatively small values of $\gamma$, provides a reasonably accurate abstraction of the instances of $k$-Facility Location games that a mechanism designer is more likely to deal with in practice. Thus, we feel that our work takes a small first step towards justifying that (not only clustering but also) strategyproof facility location is hard only when it doesn't matter.

\smallskip\noindent{\bf Contributions and Techniques.}
Our conceptual contribution is that we initiate the study of efficient (wrt. their approximation ratio for the social cost) strategyproof mechanisms for the large and natural class of $\gamma$-stable instances of $k$-Facility Location on the line. 
Our technical contribution is that we show the existence of deterministic (resp. randomized) strategyproof mechanisms with a bounded (resp. constant) approximation ratio for $5$-stable instances and any number of facilities $k \geq 2$. Moreover, we show that the optimal solution is strategyproof for $(2+\sqrt{3})$-stable instances, if the optimal clustering does not include any singleton clusters (which is likely to be the case in virtually all practical applications). 
To provide evidence that restriction to stable instances does not make the problem trivial, we strengthen the impossibility result of Fotakis and Tzamos \cite{FT12}, so that it applies to $\gamma$-stable instances, with $\gamma < \sqrt{2}$. Specifically, we show that that for any $k \geq 3$ and any $\delta > 0$, there do not exist any deterministic anonymous strategyproof mechanisms for $k$-Facility Location on $(\sqrt{2}-\delta)$-stable instances with bounded (in terms of $n$ and $k$) approximation ratio. 

At the conceptual level, we interpret the stability assumption as a prior on the class of true instances that the mechanism should be able to deal with. Namely, we assume that the mechanism has only to deal with 
$\gamma$-stable true instances, 
a restriction motivated by (and fully consistent with) how the stability assumption is used in the literature on efficient algorithms for stable clustering (see e.g., \cite{AMM17,BHW16,BL16,BL10}, where the algorithms are analyzed for stable instances only). 
More specifically, our mechanisms expect as input a declared instance such that in the optimal clustering, the distance between any two consecutive clusters is at least $\frac{(\gamma-1)^2}{2\gamma}$ times larger than the diameters of the two clusters (a.k.a. \emph{cluster-separation} property, see Lemma~\ref{c:separated-clusters}). This condition is necessary (but not sufficient) for $\gamma$-stability and can be easily checked. If the declared instance does not satisfy the cluster-separation property, our mechanisms do not allocate any facilities. Otherwise, our mechanisms allocate $k$ facilities (even if the instance is not stable). 
We prove that for all $\gamma$-stable true instances (with the exact stability factor $\gamma$ depending on the mechanism), if agents can only deviate so that the declared instance satisfies the cluster-separation property (and does not have singleton clusters, for the optimal mechanism), our mechanisms are strategyproof and achieve the desired approximation guarantee. 
Hence, if we restrict ourselves to $\gamma$-stable true instances and to agent deviations that do not obviously violate $\gamma$-stability, our mechanisms should only deal with $\gamma$-stable declared instances, due to strategyproofness. 
On the other hand, if non-stable true instances may occur, the mechanisms cannot distinguish between a stable true instance and a declared instance, which appears to be stable, but is obtained from a non-stable instance through location misreports.


The restriction that the agents of a $\gamma$-stable instance are only allowed to deviate so that the declared instance satisfies the cluster-separation property (and does not have any singleton clusters, for the optimal mechanism) bears a strong conceptual resemblance to the notion of strategyproof mechanisms with \emph{local verification} (see e.g., \cite{ADPP09,AK08,CESY12,Carroll13,FTZ16,FZ13,GL86}), where the set of each agent's allowable deviations is restricted to a so-called \emph{correspondence set}, which typically depends on the agent's true type, but not on the types of the other agents. Instead of restricting the correspondence set of each individual agent independently, we impose a structural condition on the entire declared instance, which restricts the set of the agents' allowable deviations, but in a global and observable sense. As a result, we can actually implement our notion of verification, by checking some simple properties of the declared instance, instead of just assuming that any deviation outside an agent's correspondence set will be caught and penalized (which is the standard approach in mechanisms with local verification \cite{AK08,Carroll13,CESY12}, but see e.g., \cite{ADPP09,FT10} for noticeable exceptions). 

On the technical side, we start, in Section~\ref{s:stability}, with some useful properties of stables instances of $k$-Facility Location on the line. Among others, we show (i) the \emph{cluster-separation} property (Lemma~\ref{l:separated-clusters}), which states that in any $\gamma$-stable instance, the distance between any two consecutive clusters is at least $\frac{(\gamma-1)^2}{2\gamma}$ times larger than their diameters; and (ii) the so-called \emph{no direct improvement from singleton deviations} property (Lemma~\ref{3_stab_lemma}), i.e., that in any $3$-stable instance, no agent who deviates to a location, which becomes a singleton cluster in the optimal clustering of the resulting instance, can improve her connection cost through the facility of that singleton cluster.  

In Section~\ref{s:optimal}, we show that for $(2+\sqrt{3})$-stable instances whose optimal clustering does not include any singleton clusters, the optimal solution is strategyproof (Theorem~\ref{th:optimal}). For the analysis, we observe that since placing the facility at the median location of any fixed cluster is strategyproof, a misreport cannot be profitable for an agent, unless it results in a different optimal clustering. The key step is to show that for $(2+\sqrt{3})$-stable instances without singleton clusters, a profitable misreport cannot change the optimal clustering, unless the instance obtained from the misreport violates the cluster-separation property. To the best of our knowledge, the idea of penalizing (and thus, essentially forbidding) a whole class of potentially profitable misreports by identifying how they affect a key structural property of the original instance, which becomes possible due to our restriction to stable instances, has not been used before in the design of strategyproof mechanisms for $k$-Facility Location (see also the discussion above about resemblance to mechanisms with verification). 

We should also motivate our restriction to stable instances without singleton clusters in their optimal clustering. So, let us consider the rightmost agent $x_j$ of an optimal cluster $C_i$ in a $\gamma$-stable instance $\vec{x}$. No matter the stability factor $\gamma$, it is possible that $x_j$ performs a so-called \emph{singleton deviation}. Namely, $x_j$ deviates to a remote location $x'$ (potentially very far away from any location in $\vec{x}$), which becomes a singleton cluster in the optimal clustering of the resulting instance. Such a singleton deviation might cause cluster $C_i$ to merge with (possibly part of the next) cluster $C_{i+1}$, which in turn, might bring the median of the new cluster much closer to $x_j$ (see also Fig.~\ref{f:singletonClusters} in Section~\ref{s:stability}). It is not hard to see that if we stick to the optimal solution, where the facilities are located at the median of each optimal cluster, there are $\gamma$-stable instances%
\footnote{E.g., let $k = 2$ and consider the $\Theta(\gamma)$-stable instance $(0, 1-\eps, 1, 6\gamma, 6\gamma+\eps, 6\gamma+1, 6\gamma+1+\eps, 6\gamma+2)$, for any $\gamma \geq 1$. Then, the agent at location $6\gamma$ can decrease its connection cost (from $1$) to $\eps$ by deviating to location $(6\gamma)^2$.},
with arbitrarily large $\gamma \geq 1$, where some agents can deviate to a remote location and gain, by becoming singleton clusters, while maintaining the desirable stability factor of the declared instance (see also Fig.~\ref{f:singletonClusters}). 

To deal with singleton deviations%
\footnote{Another natural way to deal efficiently with singleton deviations is through some means of \emph{location verification}, such as winner-imposing verification \cite{FT10} or $\eps$-symmetric verification \cite{FZ13,FTZ16}. Adding e.g., winner-imposing verification to the optimal mechanism, discussed in Section~\ref{s:optimal}, results in a strategyproof mechanism for $(2+\sqrt{3})$-stable instances whose optimal clustering may include singleton clusters.},
we should place the facility either at a location close to an extreme one, as we do in Section~\ref{s:rightmost} with the \rightm\ mechanism, or at a random location, as we do in Section~\ref{s:random} with the \rand\ mechanism. More specifically, in Section~\ref{s:rightmost}, we show that the \rightm\ mechanism, which places the facility of any non-singleton optimal cluster at the location of the second rightmost agent, is strategyproof for $5$-stable instances of $k$-Facility Location (even if their optimal clustering includes singleton clusters) and achieves an approximation ratio at most $(n-2)/2$ (Theorem~\ref{thm:rightmost}). Moreover, in Section~\ref{s:random}, we show that the \rand\ mechanism, which places the facility of any optimal cluster at a  location chosen uniformly at random, is strategyproof for $5$-stable instances (again even if their optimal clustering includes singleton clusters) and achieves an approximation ratio of $2$ (Theorem~\ref{thm:random}). 

To obtain a deeper understanding of the challenges behind the design of strategyproof mechanisms for stable instances of $k$-Facility Location on the line, we strengthen the impossibility result of \cite[Theorem~3.7]{FT12} so that it applies to $\gamma$-stable instances with $\gamma < \sqrt{2}$ (Section~\ref{s:lower_bound}). Through a careful analysis of the image sets of deterministic strategyproof mechanisms, we show that for any $k \geq 3$, any $\delta > 0$, and any $\rho \geq 1$, there do not exist any $\rho$-approximate deterministic anonymous strategyproof mechanisms for $(\sqrt{2}-\delta)$-stable instances of $k$-Facility Location on the line (Theorem~\ref{thm:lower_bound}). The proof of Theorem~\ref{thm:lower_bound} requires  additional ideas and extreme care (and some novelty) in the agent deviations, so as to only consider stable instances, compared against the proof of \cite[Theorem~3.7]{FT12}. Interestingly, singleton deviations play a crucial role in the proof of Theorem~\ref{thm:lower_bound}. 

\subsection{Other Related Work}
\label{s:previous}

Approximate mechanism design without money for variants and generalizations of Facility Location games on the line has been a very active and productive area of research in the last decade. 

Previous work has shown that deterministic strategyproof mechanisms can only achieve a bounded approximation ratio for $k$-Facility Location on the line, only if we have at most $2$ facilities \cite{FT12,PT09}. Notably, stable (called \emph{well-separated} in \cite{FT12}) instances with $n = k+1$ agents play a key role in the proof of inapproximability of $k$-Facility Location by deterministic anonymous strategyproof mechanisms \cite[Theorem~3.7]{FT12}. On the other hand, randomized mechanisms are known to achieve a better approximation ratio for $k = 2$ facilities \cite{LWZ09}, a constant approximation ratio if we have $k \geq 2$ facilities and only $n = k+1$ agents \cite{EGTPS11,FT13}, and an approximation ratio of $n$ for any $k \geq 3$ \cite{FT13}. Fotakis and Tzamos \cite{FT10} considered winner-imposing randomized mechanisms that achieve an approximation ratio of $4k$ for $k$-Facility Location in general metric spaces. In fact, the approximation ratio can be improved to $\Theta(\ln k)$, using the analysis of \cite{AV07}.

For the objective of maximum agent cost, Alon et al. \cite{AFPT09} almost completely characterized the approximation ratios achievable by randomized and deterministic strategyproof mechanisms for $1$-Facility Location in general metrics and rings. Fotakis and Tzamos \cite{FT13} presented a $2$-approximate randomized group strategyproof mechanism for $k$-Facility Location on the line and the maximum cost objective. For $1$-Facility Location on the line and the objective of minimizing the sum of squares of the agent connection costs, Feldman and Wilf \cite{FW11} proved that the best approximation ratio is $1.5$ for randomized and $2$ for deterministic mechanisms. 
Golomb and Tzamos \cite{GT17} presented tight (resp. almost tight) additive approximation guarantees for locating a single (resp. multiple) facilities on the line and the objectives of the maximum cost and the social cost. 

Regarding the application of perturbation stability, we follow the approach of \emph{beyond worst-case analysis} (see e.g., \cite{Rough19,Rough20}), where researchers seek a theoretical understanding of the superior practical performance of certain algorithms by formally analyzing them on practically relevant instances.
%
%
The beyond worst-case approach is not anything new for Algorithmic Mechanism Design. \emph{Bayesian} analysis, where the bidder valuations are drawn as independent samples from a distribution known to the mechanism, is standard in revenue maximization when we allocate private goods (see e.g., \cite{RoughOpt}) and has led to many strong and elegant results for social welfare maximization in combinatorial auctions by truthful posted price mechanisms (see e.g., \cite{DFKL17,FGL14}). However, in this work, instead of assuming (similar to Bayesian analysis) that the mechanism designer has a relatively accurate knowledge of the distribution of agent locations on the line (and use e.g., an appropriately optimized percentile mechanism \cite{SuiBS15}), we employ a deterministic restriction on the class of instances (namely, perturbation stability), and investigate if deterministic (resp. randomized) strategyproof mechanisms with a bounded (resp. constant) approximation ratio are possible for locating any number $k \geq 2$ facilities on such instances. To the best of our knowledge, the only previous work where the notion of perturbation stability is applied to Algorithmic Mechanism Design (to combinatorial auctions, in particular) is \cite{FF20} (but see also \cite{BDO18,EFF20} where the similar in spirit assumption of endowed valuations was applied to combinatorial markets).

\section{Notation, Definitions and Preliminaries}
\label{s:prelim}

We let $[n] = \{1, \ldots, n\}$. For any $x, y \in \reals$, we let $d(x, y) = |x - y|$ be the distance of locations $x$ and $y$ on the real line.   
For a tuple $\vec{x} = (x_1, \ldots, x_n) \in \reals^n$, we let $\vec{x}_{-i}$ denote the tuple $\vec{x}$ without coordinate $x_i$. For a non-empty set $S$ of indices, we let $\vec{x}_S = (x_i)_{i \in S}$ and $\vec{x}_{-S} = (x_i)_{i \not\in S}$. We write $(\vec{x}_{-i}, a)$ to denote the tuple $\vec{x}$ with $a$ in place of $x_i$, $(\vec{x}_{-\{i,j\}}, a, b)$ to denote the tuple $\vec{x}$ with $a$ in place of $x_i$ and $b$ in place of $x_j$, and so on.
For a random variable $X$, $\Exp(X)$ denotes the expectation of $X$. For an event $E$ in a sample space, $\Prob(E)$ denotes the probability that $E$ occurs.

\smallskip\noindent\textbf{Instances.}
We consider $k$-Facility Location with $k \geq 2$ facilities and $n \geq k+1$ agents on the real line. We let $N = \{ 1, \ldots, n\}$ be the set of agents. Each agent $i \in N$ resides at a location $x_i \in \reals$, which is $i$'s private information. 
We usually refer to a locations profile $\vec{x} = (x_1, \ldots, x_n) \in \reals^n$, $x_1 \leq \cdots \leq x_n$, as an \emph{instance}. By slightly abusing the notation, we use $x_i$ to refer both to the agent $i$'s location and sometimes to the agent $i$ (i.e., the strategic entity) herself. 

\smallskip\noindent\textbf{Mechanisms.}
A \emph{deterministic mechanism} $M$ for $k$-Facility Location maps an instance $\vec{x}$ to a $k$-tuple $(c_1, \ldots, c_k) \in \reals^k$, $c_1 \leq \cdots \leq c_k$, of facility locations. We let $M(\vec{x})$ denote the outcome of $M$ in instance $\vec{x}$, and let $M_j(\vec{x})$ denote $c_j$, i.e., the $j$-th smallest coordinate in $M(\vec{x})$. We write $c \in M(\vec{x})$ to denote that $M(\vec{x})$ places a facility at location $c$.
A \emph{randomized mechanism} $M$ maps an instance $\vec{x}$ to a probability distribution over $k$-tuples $(c_1, \ldots, c_k) \in \reals^k$.

\smallskip\noindent\textbf{Connection Cost and Social Cost.}
Given a $k$-tuple $\vec{c} = (c_1, \ldots, c_k)$, $c_1 \leq \cdots \leq c_k$, of facility locations, the connection cost of agent $i$ wrt. $\vec{c}$, denoted $d(x_i, \vec{c})$, is
\( d(x_i, \vec{c}) = \min_{1 \leq j \leq k} |x_i - y_j| \).
Given a deterministic mechanism $M$ and an instance $\vec{x}$, $d(x_i, M(\vec{x}))$ denotes the connection cost of agent $i$ wrt. the outcome of $M(\vec{x})$.
If $M$ is a randomized mechanism, the expected connection cost of agent $i$ is
\( \Exp_{\vec{c} \sim M(\vec{x})}(d(x_i, \vec{c})) \).
The \emph{social cost} of a deterministic mechanism $M$ for an instance $\vec{x}$ is
\(\SC(\vec{x}, M(\vec{x})) = \sum_{i = 1}^n d(x_i, M(\vec{x})) \).
The social cost of a facility locations profile $\vec{c} \in \reals^k$ is 
\( \SC(\vec{x}, \vec{c}) = \sum_{i = 1}^n d(x_i, \vec{c}) \).
The \emph{expected social cost} of a randomized mechanism $M$ on instance $\vec{x}$ is
\[ \SC(\vec{x}, M(\vec{x})) = \sum_{i = 1}^n \Exp_{\vec{c} \sim M(\vec{x})}(d(x_i, \vec{c}))  \,.\] 
The \emph{optimal social cost} for an instance $\vec{x}$ is \(  \SC^\ast(\vec{x}) = \min_{\vec{c} \in \reals^k} \sum_{i = 1}^n d(x_i, \vec{c})  \). For $k$-Facility Location, the optimal social cost (and the corresponding optimal facility locations profile) can be computed in $O(kn\log n)$ time by standard dynamic programming.

\smallskip\noindent\textbf{Approximation Ratio.}
A mechanism $M$ has an approximation ratio of $\rho \geq 1$, if for any instance $\vec{x}$, $\SC(\vec{x}, M(\vec{x})) \leq \rho\,\SC^\ast(\vec{x})$. We say that the approximation ratio $\rho$ of $M$ is \emph{bounded}, if $\rho$ is bounded from above either by a constant or by a (computable) function of $n$ and $k$.

\smallskip\noindent\textbf{Strategyproofness.}
A deterministic mechanism $M$ is \emph{strategyproof}, if no agent can benefit from misreporting her location. Formally, $M$ is strategyproof, if for all location profiles $\vec{x}$, any agent $i$, and all locations $y$,
\( d(x_i, M(\vec{x})) \leq d(x_i, M((\vec{x}_{-i}, y))\).
Similarly, a randomized mechanism $M$ is strategyproof (in expectation), if for all location profiles $\vec{x}$, any agent $i$, and all locations $y$,
\( \Exp_{\vec{c} \sim M(\vec{x})}(d(x_i, \vec{c})) \leq \Exp_{\vec{c} \sim M((\vec{x}_{-i}, y)}(d(x_i, \vec{c})) \).

\smallskip\noindent\textbf{Clusterings.}
A \emph{clustering} (or $k$-clustering, if $k$ is not clear from the context) of an instance $\vec{x}$ is any partitioning $\vec{C} = (C_1, \ldots, C_k)$ of $\vec{x}$ into $k$ sets of consecutive agent locations. We index clusters from left to right. I.e., 
$C_1 = \{ x_1, \ldots, x_{|C_1|} \}$, $C_2 = \{ x_{|C_1|+1}, \ldots, x_{|C_1|+|C_2|} \}$, and so on. We refer to a cluster $C_i$ that includes only one agent (i.e., with $|C_i|= 1$) as a \emph{singleton} cluster. We sometimes use $(\vec{x}, \vec{C})$ to highlight that we consider $\vec{C}$ as a clustering of instance $\vec{x}$. 

Two clusters $C$ and $C'$ are identical, denoted $C = C'$, if they include the exact same locations. Two clusterings $\vec{C} = (C_1, \ldots, C_k)$ and $\vec{Y} = (Y_1, \ldots, Y_k)$ of an instance $\vec{x}$ are the same, if $C_i = Y_i$, for all $i \in [k]$. Abusing the notation, we say that a clustering $\vec{C}$ of an instance $\vec{x}$ is identical to a clustering $\vec{Y}$ of a $\gamma$-perturbation $\vec{x}'$ of $\vec{x}$ (see also Definition~\ref{def:stable}), if $|C_i| = |Y_i|$, for all $i \in [k]$. 

We let $x_{i,l}$ and $x_{i, r}$ denote the leftmost and the rightmost agent of each cluster $C_i$. Under this notation, $x_{i-1, r} < x_{i, l} \leq x_{i, r} < x_{i+1, l}$, for all $i \in \{2, \ldots, k-1\}$. Exploiting the linearity of instances, we extend this notation to refer to other agents by their relative location in each cluster. Namely, $x_{i,l+1}$ (resp. $x_{i,r-1}$) is the second agent from the left (resp. right) of cluster $C_i$\,.
The \emph{diameter} of a cluster $C_i$ is $D(C_i) = d(x_{i,l}, x_{i,r})$. The distance of clusters $C_i$ and $C_j$ is $d(C_i, C_j) = \min_{x \in C_i , y\in C_j}\{d(x, y)\}$, i.e., the minimum distance between a location $x \in C_i$ and a location $y \in C_j$. 

A $k$-facility locations (or $k$-centers) profile $\vec{c} = (c_1, \ldots, c_k)$ induces a clustering $\vec{C} = (C_1, \ldots, C_k)$ of an instance $\vec{x}$ by assigning each agent / location $x_j$ to the cluster $C_i$ with facility $c_i$ closest to $x_j$. Formally, for each $i \in [k]$, $C_i = \{ x_j \in \vec{x} : d(x_j, c_i) = d(x_j, \vec{c}) \}$. The \emph{optimal clustering} of an instance $\vec{x}$ is the clustering of $\vec{x}$ induced by the facility locations profile with minimum social cost. 

The social cost of a clustering $\vec{C}$ induced by a $k$-facility locations profile $\vec{c}$ on an instance $\vec{x}$ is simply $\SC(\vec{x}, \vec{c})$, i.e., the social cost of $\vec{c}$ for $\vec{x}$. We sometimes refer to the social cost $\SC(\vec{x}, \vec{C})$ of a clustering $\vec{C}$ for an instance $\vec{x}$, without any explicit reference to the corresponding facility locations profile. Then, we refer to the social cost $\SC(\vec{x}, \vec{c})$, where each facility $c_i$ is located at the median location of $C_i$ (the left median location of $C_i$, if $|C_i|$ is even). 

We often consider certain structural changes in a clustering due to agent deviations. Let $\vec{C}$ be a clustering of an instance $\vec{x}$, which due to an agent deviation, changes to a different clustering $\vec{C}'$. We say that cluster $C_i$ is \emph{split} when $\vec{C}$ changes to $\vec{C}'$, if not all agents in $C_i$ are served by the same facility in $\vec{C}'$. We say that $C_i$ is \emph{merged} in $\vec{C}'$, if all agents in $C_i$ are served by the same facility, but this facility also serves in $\vec{C}'$ some agents not in $C_i$.

\section{Perturbation Stability on the Line: Definition and Properties}
\label{s:stability}

Next, we introduce the notion of $\gamma$-(linear) stability and prove some useful properties of $\gamma$-stable instances of $k$-Facility Location, which are repeatedly used in the analysis of our mechanisms. 

\begin{definition}[$\gamma$-Pertrubation and $\gamma$-Stability]\label{def:stable}
Let $\vec{x} = (x_1, \ldots, x_n)$ be a locations profile. A locations profile $\vec{x}' = (x'_1, \ldots, x'_n)$ is a $\gamma$-perturbation of $\vec{x}$, for some $\gamma \geq 1$, if $x'_1 = x_1$ and for every $i \in [n-1]$, $d(x_i, x_{i+1}) / \gamma \leq d(x'_{i}, x'_{i+1}) \leq d(x_{i}, x_{i+1})$. A $k$-Facility Location instance $\vec{x}$ is $\gamma$-perturbation stable (or simply, $\gamma$-stable), if $\vec{x}$ has a unique optimal clustering $(C_1, \ldots, C_k)$ and every $\gamma$-perturbation $\vec{x}'$ of $\vec{x}$ has the same unique optimal clustering $(C_1, \ldots, C_k)$. 
\end{definition}

Namely, a $\gamma$-perturbation $\vec{x}'$ of an instance $\vec{x}$ is obtained by moving a subset of pairs of consecutive locations closer by a factor at most $\gamma \geq 1$. A $k$-Facility Location instance $\vec{x}$ is $\gamma$-stable, if $\vec{x}$ and any $\gamma$-perturbation $\vec{x}'$ of $\vec{x}$ admit the same unique optimal clustering (where clustering identity for $\vec{x}$ and $\vec{x}'$ is understood as explained in Section~\ref{s:prelim}). 
We consistently select the optimal center $c_i$ of each optimal cluster $C_i$ with an even number of points as the left median point of $C_i$. 

Our notion of linear perturbation stability naturally adapts the notion of metric perturbation stability \cite[Definition~2.5]{AMM17} to the line. 
We note, the class of $\gamma$-stable linear instances, according to Definition~\ref{def:stable}, is at least as large as the class of metric $\gamma$-stable linear instances, according to \cite[Definition~2.5]{AMM17}.
Similarly to \cite[Theorem~3.1]{AMM17} (see also \cite[Lemma~7.1]{Rough17_lect6} and \cite[Corollary~2.3]{ABS12}), we can show that for all $\gamma \geq 1$, every $\gamma$-stable instance $\vec{x}$, which admits an optimal clustering $C_1, \ldots, C_k$ with optimal centers $c_1, \ldots, c_k$, satisfies the following \emph{$\gamma$-center proximity} property: For all cluster pairs $C_i$ and $C_j$, with $i \neq j$, and all locations $x \in C_i$, $d(x, c_j) > \gamma d(x, c_i)$. 

We repeatedly use the following immediate consequence of $\gamma$-center proximity (see also \cite[Lemma~7.2]{Rough17_lect6}). 
The proof generalizes the proof of \cite[Lemma~7.2]{Rough17_lect6} to any $\gamma \geq 2$. 

\begin{proposition}\label{pr:separation}
Let $\gamma \geq 2$ and let $\vec{x}$ be any $\gamma$-stable instance, with unique optimal clustering $C_1, \ldots, C_k$ and optimal centers $c_1, \ldots, c_k$. Then, for all clusters $C_i$ and $C_j$, with $i \neq j$, and all locations $x \in C_i$ and $y \in C_j$, $d(x, y) > (\gamma - 1) d(x, c_i)$. 
\end{proposition}

The following observation, which allows us to treat stability factors multiplicatively, is an immediate consequence of Definition~\ref{def:stable}. 
 
\begin{observation}\label{property_multiplication_observation}
Every $\alpha$-perturbation followed by a $\beta$-perturbation of a locations profile can be implemented by a $(\alpha\beta)$-perturbation and vice versa. Hence, a $\gamma$-stable instance remains $(\gamma/\gamma')$-stable after a $\gamma'$-perturbation, with $\gamma' < \gamma$, is applied to it. 
\end{observation}

We next show that for $\gamma$ large enough, the optimal clusters of a $\gamma$-stable instance are well-separated, in the sense that the distance of two consecutive clusters is larger than their diameters. 

\begin{lemma}[Cluster-Separation Property]\label{l:separated-clusters}
For any $\gamma$-stable instance on the line with optimal clustering $(C_1,\ldots,C_k)$ and all clusters $C_i$ and $C_j$, with $i \neq j$, 
%
%
$d(C_i, C_j)>\frac{(\gamma-1)^2}{2\gamma}\max\{D(C_i),D(C_j)\}$.
\end{lemma}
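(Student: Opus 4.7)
The strategy is to reduce to adjacent clusters and then combine three instances of $\gamma$-center proximity at carefully chosen endpoints. For non-adjacent $C_i$ and $C_j$ (say $j \geq i+2$), linearity on the real line forces $d(C_i, C_j) \geq d(C_i, C_{i+1})$ and $d(C_i, C_j) \geq d(C_{j-1}, C_j)$, so it suffices to prove the bound for every adjacent pair. Focus on adjacent $C_i, C_{i+1}$; by a symmetric argument run from the right end of $C_{i+1}$, the bound involving $D(C_i)$ will imply the matching bound involving $D(C_{i+1})$, and taking the maximum yields the statement.

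Introduce $R_i = d(x_{i,r}, c_i)$, $L_i = D(C_i) - R_i$, $L_{i+1} = d(x_{i+1,l}, c_{i+1})$, and write $d = d(x_{i,r}, x_{i+1,l}) = d(C_i, C_{i+1})$. Since $x_{i,l} \leq c_i \leq x_{i,r} < x_{i+1,l} \leq c_{i+1}$ on the line, we obtain the triangle-equalities $d(x_{i,r}, c_{i+1}) = d + L_{i+1}$, $d(x_{i+1,l}, c_i) = d + R_i$, and $d(x_{i,l}, c_{i+1}) = D(C_i) + d + L_{i+1}$. Applying $\gamma$-center proximity $d(x, c_j) > \gamma\, d(x, c_i)$ at $x_{i,r}$, $x_{i+1,l}$, and $x_{i,l}$ against the opposing center, and using $L_i = D(C_i) - R_i$, produces
\begin{align*}
\text{(I)}\quad & d + L_{i+1} > \gamma R_i, \\
\text{(II)}\quad & d + R_i > \gamma L_{i+1}, \\
\text{(III)}\quad & d + L_{i+1} > (\gamma - 1)\, D(C_i) - \gamma R_i.
\end{align*}

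Next, eliminate $L_{i+1}$ and $R_i$ in turn. Adding (I) and (III) gives $L_{i+1} > (\gamma - 1) D(C_i)/2 - d$, while (II) rearranges to the matching upper bound $L_{i+1} < (d + R_i)/\gamma$; comparing these and clearing denominators yields $(\gamma+1)\, d > \gamma(\gamma-1)\, D(C_i)/2 - R_i$. Independently, substituting the lower bound $L_{i+1} > \gamma R_i - d$ from (I) into (II) eliminates $L_{i+1}$ and produces the Proposition~\ref{pr:separation}-flavored bound $d > (\gamma - 1) R_i$, hence $R_i < d/(\gamma - 1)$. Plugging this into the previous inequality, multiplying through by $(\gamma - 1)$, and simplifying the bracket $[(\gamma+1)(\gamma-1) + 1] = \gamma^2$, gives $\gamma^2\, d > \gamma (\gamma - 1)^2 D(C_i)/2$, i.e., $d > \frac{(\gamma-1)^2}{2\gamma} D(C_i)$.

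The main obstacle will be identifying precisely this trio of inequalities. Using only (I) and (II) recovers just $d > (\gamma - 1) R_i$, which is insufficient because $R_i$ can be arbitrarily small relative to $D(C_i)$ when $c_i$ lies near $x_{i,r}$; symmetrically, using only (I) and (III) recovers only the weaker $(\gamma-2)/2$ constant after averaging. The ``diagonal'' inequality (III), obtained by applying center proximity across the inter-cluster gap from $x_{i,l}$ all the way to $c_{i+1}$, is the essential extra ingredient: combined with (I) it supplies a second lower bound on $L_{i+1}$ whose interaction with (II)'s upper bound, refined through the $R_i$ bound from Proposition~\ref{pr:separation}, extracts the additional factor $(\gamma-1)/\gamma$ and produces the tight constant.
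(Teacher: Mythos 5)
Your proposal is correct and follows essentially the same route as the paper's proof: both rest on exactly the three $\gamma$-center-proximity applications at $x_{i,r}$, $x_{i+1,l}$, and (crucially) the ``diagonal'' one at $x_{i,l}$ toward $c_{i+1}$, and both obtain the constant $\frac{(\gamma-1)^2}{2\gamma}$ by balancing the resulting pair of lower bounds on $d(C_i,C_{i+1})$. The only difference is presentational — the paper parametrizes the center position as $d(x_{i,l},c_i)=\beta\,\Diam(C_i)$ and minimizes the max of the two bounds over $\beta$ (finding $\beta=\tfrac12+\tfrac{1}{2\gamma}$), whereas you eliminate $L_{i+1}$ and $R_i$ by direct linear combination; your algebra checks out and yields the identical constant.
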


The cluster-separation property of Lemma~\ref{l:separated-clusters} was first obtained in \cite{ACM0W20} as a consequence of $\gamma$-cluster proximity. For completeness, in Section~\ref{s:app:separated-clusters}, we present a different proof that exploits the linear structure of the instance. Setting $\gamma \geq 2+\sqrt{3}$, we get that: 

\begin{corollary}\label{c:separated-clusters}
Let $\gamma \geq 2+\sqrt{3}$ and let $\vec{x}$ be any $\gamma$-stable instance with unique optimal clustering ($C_1, \ldots, C_k)$. Then, for all clusters $C_i$ and $C_j$, with $i \neq j$, $d(C_i, C_j) > \,\max\{ \Diam(C_i), \Diam(C_j) \}$. 
\end{corollary}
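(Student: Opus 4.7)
The plan is to derive Corollary~\ref{c:separated-clusters} directly from Lemma~\ref{l:separated-clusters} by verifying that the coefficient $\frac{(\gamma-1)^2}{2\gamma}$ in the cluster-separation bound is at least $1$ once $\gamma \geq 2+\sqrt{3}$. Concretely, I would apply Lemma~\ref{l:separated-clusters} to any pair of distinct optimal clusters $C_i, C_j$ of the $\gamma$-stable instance $\vec{x}$, obtaining
\[
d(C_i, C_j) \;>\; \frac{(\gamma-1)^2}{2\gamma}\,\max\{\Diam(C_i),\Diam(C_j)\},
\]
and then reduce the corollary to the elementary real-analytic claim that $\frac{(\gamma-1)^2}{2\gamma} \geq 1$ whenever $\gamma \geq 2+\sqrt{3}$.

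To handle this inequality, I would clear denominators (using $\gamma \geq 1 > 0$) to get the equivalent quadratic condition $\gamma^2 - 4\gamma + 1 \geq 0$, whose roots are exactly $2 \pm \sqrt{3}$. Since the leading coefficient is positive and $2-\sqrt{3} < 1 \leq \gamma$, the inequality holds precisely on the branch $\gamma \geq 2+\sqrt{3}$, which is our hypothesis. Plugging this back into the lemma's bound yields the desired strict inequality $d(C_i,C_j) > \max\{\Diam(C_i),\Diam(C_j)\}$, and since this argument is uniform in $i \neq j$, the corollary follows.

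There is no real obstacle here beyond the quadratic computation: the content is all in Lemma~\ref{l:separated-clusters}, and the corollary is just the observation that $2+\sqrt{3}$ is the least $\gamma$ for which the lemma's prefactor crosses $1$. The only minor care is to keep the strict inequality: the lemma already gives a strict bound, so the coefficient only needs to be $\geq 1$ (not strictly greater), and thus $\gamma = 2+\sqrt{3}$ itself is admissible.
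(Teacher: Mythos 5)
Your proposal is correct and is exactly the paper's argument: the paper also derives the corollary by instantiating Lemma~\ref{l:separated-clusters} and observing that $\frac{(\gamma-1)^2}{2\gamma}\geq 1$ precisely when $\gamma\geq 2+\sqrt{3}$ (the root of $\gamma^2-4\gamma+1=0$ above $1$). Your added care about the strictness of the inequality is a correct, if minor, elaboration of the same step.
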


The following is an immediate consequence of the cluster-separation property in Lemma~\ref{l:separated-clusters}.

\begin{observation}\label{obs:cluster_distances}
Let $\vec{x}$ be a $k$-Facility Location with a clustering $\vec{C}=(C_1,\ldots,C_k)$ such that for any two clusters $C_i$ and $C_j$, $\max\{ D(C_i), D(C_j)\} < d(C_i,C_j)$. Then, if in the optimal clustering of $\vec{x}$, there is a facility at the location of some $x \in C_i$, no agent in $C_i$ is served by a facility at $x_j \not\in C_i$.
\end{observation}

Next, we establish the so-called \emph{no direct improvement from singleton deviations} property, used to show the strategyproofness of the \rightm\ and \rand\ mechanisms. Namely, we show that in any $3$-stable instance, no agent deviating to a singleton cluster in the optimal clustering of the resulting instance can improve her connection cost through the facility of that singleton cluster. The proof is deferred to Appendix~\ref{app:s:3_stab_lemma}.

\begin{lemma}\label{3_stab_lemma}
Let $\vec{x}$ be a $\gamma$-stable instance with $\gamma\geq 3$ and optimal clustering $\vec{C}=(C_1,...,C_k)$ and cluster centers $(c_1,...,c_k)$, and let an agent $x_{i}\in C_i \setminus  \{c_i\} $ and a location $x'$ such that $x'$ is a singleton cluster in the optimal clustering of the resulting instance $(\vec{x}_{-i}, x')$. Then, $d(x_i,x') > d(x_{i},c_i)$.
\end{lemma}



The following shows that for $5$-stable instances $\vec{x}$, an agent cannot form a singleton cluster, unless she deviates by a distance larger than the diameter of her cluster in $\vec{x}$'s optimal clustering. 

\begin{lemma}\label{l:singleton_deviation_5}
Let $\vec{x}$ be any $\gamma$-stable instance with $\gamma\geq 5$ and optimal clustering $\vec{C}=(C_1,...,C_k)$. Let $x_{i} \in C_i \setminus  \{c_i\}$ be any agent and $x'$ any location such that $x'$ is a singleton cluster in the optimal clustering of instance $\vec{x}'=(\vec{x}_{-i},x')$, where $x_i$ has deviated to $x'$. Then, $d(x',x_i)>\Diam(C_i)$.
\end{lemma}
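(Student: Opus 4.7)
The plan is to argue by contradiction. Assume $d(x_i, x') \leq \Diam(C_i)$ and derive a contradiction with the $5$-stability of $\vec{x}$.

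First I would locate $x'$ within $\vec{x}$. Since $\gamma \geq 5 \geq 3$, Lemma~\ref{3_stab_lemma} gives $d(x_i, x') > d(x_i, c_i)$, so the contradictory assumption yields $d(x_i, c_i) < \Diam(C_i)$. Since $\gamma \geq 5 > 2+\sqrt{3}$, Corollary~\ref{c:separated-clusters} gives $d(C_{i-1}, C_i), d(C_i, C_{i+1}) > \Diam(C_i)$, so combined with $d(x_i, x') \leq \Diam(C_i)$ this forces $x' \in (x_{i-1, r}, x_{i+1, l})$, i.e., $x'$ lies in the gap region around $C_i$.

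The next step is a cost comparison in $\vec{x}' = (\vec{x}_{-i}, x')$. Using $\vec{C}$'s partition on $\vec{x}'$ (with $x'$ inheriting $x_i$'s slot in $C_i$), the triangle inequality applied to $C_i$'s contribution yields $\SC(\vec{x}', \vec{C}) \leq \SC^\ast(\vec{x}) + d(x', x_i) \leq \SC^\ast(\vec{x}) + \Diam(C_i)$. By hypothesis, the optimal clustering $\vec{C}'$ of $\vec{x}'$ has $\{x'\}$ as a singleton, so its cost equals the minimum cost of any $(k-1)$-clustering of $\vec{x} \setminus \{x_i\}$. Optimality of $\vec{C}'$ thus gives the $(k-1)$-optimal cost of $\vec{x} \setminus \{x_i\}$ is at most $\SC^\ast(\vec{x}) + \Diam(C_i)$.

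To derive the contradiction, I would lower-bound the $(k-1)$-optimal cost of $\vec{x} \setminus \{x_i\}$ using the enhanced cluster-separation of Lemma~\ref{l:separated-clusters} for $\gamma = 5$, which gives $d(C_j, C_{j+1}) > \frac{16}{10}\max\{\Diam(C_j), \Diam(C_{j+1})\}$. By pigeonhole, any $(k-1)$-partition of the $n-1$ agents of $\vec{x} \setminus \{x_i\}$, which naturally form $k$ well-separated groups $C_1, \dots, C_{i-1}, C_i \setminus \{x_i\}, C_{i+1}, \dots, C_k$, must contain a cluster spanning two consecutive groups; the merger cost of such a cluster is at least $\min\{|\tilde{C}_j|, |\tilde{C}_{j+1}|\} \cdot d(\tilde{C}_j, \tilde{C}_{j+1})$, where $\tilde{C}_l$ denotes the $l$-th modified natural cluster.

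The main obstacle is the case where the cheapest merger involves a singleton, such as an original singleton $C_j$, or $\tilde{C}_i = C_i \setminus \{x_i\}$ when $|C_i| = 2$. Then the raw cluster-separation bound of roughly $1.6\,\Diam(\cdot)$ might not exceed $\Diam(C_i) + d(x_i, c_i)$, which can approach $2\,\Diam(C_i)$. The hard part will be closing this gap by invoking the full $5$-stability via Observation~\ref{property_multiplication_observation}: treating $\vec{x}$ as $3$-stable after any $(5/3)$-perturbation and applying Lemma~\ref{3_stab_lemma} on a perturbation chosen to drive the role of $d(x_i, c_i)$ in that lemma up to $\Diam(C_i)$, so that the bound $d(x_i, x') > d(x_i, c_i)$ is effectively amplified to $d(x_i, x') > \Diam(C_i)$, contradicting the assumption.
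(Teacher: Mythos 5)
Your setup (locating $x'$ in the gap adjacent to $C_i$, and the upper bound $\SC(\vec{x}',\vec{C}) \leq \SC^\ast(\vec{x}) + d(x',x_i) \leq \SC^\ast(\vec{x}) + \Diam(C_i)$ on the cost of keeping $\vec{C}$ and serving $x'$ from $c_i$) matches the paper's final accounting. The genuine gap is in the lower bound on the optimal $(k-1)$-clustering cost of $\vec{x}\setminus\{x_i\}$, and your proposed fix does not close it. Your pigeonhole bound is governed by the \emph{cheapest} merger of two consecutive groups, and Lemma~\ref{l:separated-clusters} only guarantees that each gap exceeds $1.6$ times the diameters of the two groups \emph{adjacent to that gap}. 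Those diameters can be arbitrarily small compared to $\Diam(C_i)$ (e.g., two nearby singleton clusters far from $C_i$), so cluster-separation alone gives no lower bound of order $\Diam(C_i)$ on the re-clustering cost — let alone the bound $> \Diam(C_i) + d(x_i,c_i) \approx 2\Diam(C_i)$ that your accounting requires. The paper obtains this bound from a completely different source: (a) it first proves (Properties 1--3 in Appendix~\ref{app:s:singleton_deviation_5}) that in the optimal clustering of $\vec{x}'$ the group $C_i\setminus\{x_i\}$ keeps its own facility and is not mixed with outsiders, so the merger is exactly a $(k-2)$-re-clustering of $\vec{x}\setminus C_i$; (b) it lower-bounds that re-clustering cost by $\Diam(C_i)/2$ using the \emph{optimality of $\vec{C}$ for $\vec{x}$} against the alternative that places two facilities inside $C_i$; and (c) it amplifies $\Diam(C_i)/2$ to $2\Diam(C_i)$ by applying a $4$-perturbation that shrinks all distances outside $C_i$, which by $5$-stability cannot change the optimal clustering. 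None of (a)--(c) appears in your plan.

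Your final suggestion — perturbing so as to ``drive the role of $d(x_i,c_i)$ in Lemma~\ref{3_stab_lemma} up to $\Diam(C_i)$'' — does not work as stated: perturbations only shrink distances, and both $d(x_i,c_i)$ and $\Diam(C_i)$ are internal to $C_i$, so their ratio cannot be improved this way; moreover, Lemma~\ref{3_stab_lemma} applied to a perturbed instance constrains deviations \emph{in that perturbed instance}, and its conclusion does not transfer back to deviations of the original agent $x_i$ without exactly the kind of cost-comparison argument you are missing. What a perturbation \emph{can} amplify is the external re-clustering cost relative to the fixed internal quantities of $C_i$ — which is step (c) above — but that presupposes steps (a) and (b).
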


\begin{proof}[Sketch.]
Initially, we show that a clustering $\vec{C}'$ of instance $\vec{x}'= (\vec{x}_{-i},x')$, with $d(x',x_i)\leq \Diam(C_i)$, cannot be optimal and contain $x'$ as a singleton cluster, unless some agent $\vec{x}\setminus C_i$ is clustered together with some agent in $C_i$. To this end, we use the lower bound on the distance between difference clusters for $5$-stable instances show in  Lemma~\ref{l:separated-clusters}. Then, using stability arguments, i.e. that the optimal clustering should not change for instance $\vec{x}$, even when we decrease, by a factor of $4$, the distances between consecutive agents in $\vec{x}\setminus C_i$, we show that in $\vec{C}'$ agents in $\vec{x}\setminus C_i$ experience an increase in cost of at least $2\Diam(C_i)$ (notice that $\vec{x}\setminus C_i = \vec{x}'\setminus (C_i \cup \{ x'\})$). But the additional cost of serving $x'$ from $c_i$ in clustering $\vec{C}$ is at most $2\Diam(C_i)$, since $d(x',x_i)\leq\Diam(C_i)$ and $d(x_i, c_i)\leq\Diam(C_i)$. Hence retaining clustering $\vec{C}$ and serving location $x'$ from $c_i$ would have a smaller cost than the supposedly optimal clustering $\vec{C}'$. The complete proof follows by a careful case analysis and can be found in Appendix~\ref{app:s:singleton_deviation_5}.
\qed\end{proof}

\begin{figure}[t]
 \centering\includegraphics[width=0.85\textwidth]{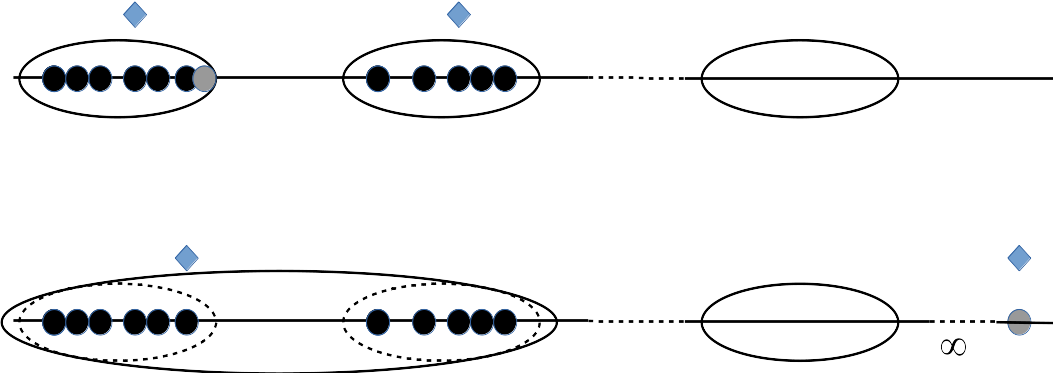}
 \caption{An example of a so-called \emph{singleton deviation}. The deviating agent (grey) declares a remote location, becomes a singleton cluster, and essentially turns the remaining agents into a $(k-1)$-Facility Location instance. Thus, the deviating agent can benefit from her singleton deviation, due to the subsequent cluster merge.}
 \label{f:singletonClusters}
\end{figure}

\begin{algorithm}[tb]
\label{algorithm:optimal}
\DontPrintSemicolon
\SetAlgoLined
\LinesNumbered
\KwResult{An allocation of $k$-facilities}
\KwIn{A $k$-Facility Location instance $\vec{x}$.}
Compute the optimal clustering
 $(C_1, \ldots, C_k)$. Let $c_i$ be the left median point of each cluster $C_i$.\;

 \uIf{\big($\exists i \in [k]$ with $|C_i|=1$\big) or \big($\exists i\in [k-1]$ with $\max\{ D(C_i), D(C_{i+1})\} \geq d(C_i, C_{i+1})$\big)}{
 \KwOut {``FACILITIES ARE NOT ALLOCATED''.}}\uElse{
 
\KwOut {The $k$-facility allocation $(c_1, \ldots, c_k)$ \;}}

\caption{OPTIMAL }
\end{algorithm}

\vspace*{-10mm}
\section{The Optimal Solution is Strategyproof for $(2+\sqrt{3})$-Stable Instances}
\label{s:optimal}

We next show that the \opt\ mechanism, which allocates the facilities optimally, is strategyproof for $(2+\sqrt{3})$-stable instances of $k$-Facility Location whose optimal clustering does not include any singleton clusters. More specifically, in this section, we analyze Mechanism~\ref{algorithm:optimal}.

In general, due to the incentive compatibility of the median location in a single cluster, a deviation can be profitable only if it results in a $k$-clustering different from the optimal clustering $(C_1, \ldots, C_k)$ of $\vec{x}$. For $\gamma$ is sufficiently large, $\gamma$-stability implies that the optimal clusters are well identified 
so that any attempt to alter the optimal clustering (without introducing singleton clusters and without violating the cluster separation property, which is necessary of stability) results in an increased cost for the deviating agent. We should highlight that Mechanism~\ref{algorithm:optimal} may also ``serve'' non-stable instances that satisfy the cluster separation property. We next prove that the mechanism is stategyproof if the true instance is $(2+\sqrt{3})$-stable and its optimal clustering does not include any singleton clusters, when the agent deviations do not introduce any singleton clusters and not result in instances that violate the cluster separation property (i.e. are served by the mechanism) .


\begin{theorem}\label{th:optimal}
The \opt\ mechanism applied to $(2+\sqrt{3})$-stable instances of $k$-Facility Location without singleton clusters in their optimal clustering is strategyproof and minimizes the social cost. 
\end{theorem}


\begin{proof}
We first recall some of the notation about clusterings, introduced in Section~\ref{s:prelim}. Specifically, for a clustering $\vec{C} = (C_1, \ldots, C_k)$ of an instance $\vec{x}$ with centers $\vec{c} = (c_1, \ldots, c_k)$, the cost of an agent (or a location) $x$ is $d(x,\vec{C}) = \min_{j\in[k]}\{d(x, c_j)\}$. The cost of a set of agents $X$ in a clustering $\vec{C}$ is $cost(X,\vec{C})=\sum_{x \in X} d(x_j,\vec{C})$. Finally, the cost of an instance $\vec{x}$ in a clustering $\vec{C}$ is $cost(\vec{x}, \vec{C}) = \sum_{x_j\in \vec{x}} d(x_j,\vec{C})$. This general notation allows us to refer to the cost of the same clustering for different instances. I.e, if $\vec{C}$ is the optimal clustering of $\vec{x}$, then $cost(\vec{y},\vec{C})$ denotes the cost of instance $\vec{y}$ in clustering $\vec{C}$ (where we select the same centers as in clustering $\vec{C}$ for $\vec{x}$). 

The fact that if \opt\ outputs $k$ facilities, they optimize the social cost is straightforward. So, we only need to establish strategyproofness. To this end, we show the following: Let $\vec{x}$ be any $(2+\sqrt{3})$-perturbation stable $k$-Facility Location instance with optimal clustering $\vec{C} = (C_1,\ldots,C_k)$. For any agent $i$ and any location $y$, let $\vec{Y}$ be the optimal clustering of the instance $\vec{y}=(\vec{x}_{-i}, y)$ resulting from the deviation of $i$ from $x_i$ to $y$. Then, if $y$ does not form a singleton cluster in $(\vec{y}, \vec{Y})$, either $d(x_i, \vec{C}) < d(x_i, \vec{Y})$, or there is an $i\in [k-1]$ for which $\max\{ D(Y_i), D(Y_{i+1})\} \geq d(Y_i, Y_{i+1})$. 

So, we let $x_i\in C_i$ deviate to a location $y$, resulting in $\vec{y} = (\vec{x}_{-i}, y)$ with optimal clustering $\vec{Y}$. Since $y$ is not a singleton cluster, it is clustered with agents belonging in one or two clusters of $\vec{C}$, say either in cluster $C_j$ or in clusters $C_{j-1}$ and $C_j$. By optimally of $\vec{C}$ and $\vec{Y}$, the number of facilities serving $C_{j-1} \cup C_{j} \cup \{ y\}$ in $(\vec{y}, \vec{Y})$ is no less than the number of facilities serving $C_{j-1} \cup C_{j}$ in $(\vec{x}, \vec{C})$. Hence, there is at least one facility in either $C_{j-1}$ or $C_{j}$. 

Wlog., suppose that a facility is allocated to an agent in $C_j$ in $(\vec{y},\vec{Y})$. By Corollary~\ref{c:separated-clusters} and Observation~\ref{obs:cluster_distances}, no agent in $C_j$ is served by a facility in $\vec{x} \setminus C_j$ in $\vec{Y}$. Thus we get the following cases about what happens with the optimal clustering $\vec{Y}$ of instance $\vec{y} = (\vec{x}_{-i}, y)$:

\begin{description}
\item[Case 1:] \textit{ $y$ is not allocated a facility in $\vec{Y}$}: This can happen in one of two ways:
    
\begin{description}
        \item[Case 1a:] $y$ is clustered together with some agents from cluster $C_j$ and no facility placed in $C_j$ serves agents in $\vec{x}\setminus C_j$ in $\vec{Y}$. 
        
        \item[Case 1b:] $y$ is clustered together with some agents from a cluster $C_j$ and at least one of the facilities placed in $C_j$ serve agents in $\vec{x}\setminus C_j$ in $\vec{Y}$.
\end{description}
        
\item[Case 2:] \textit{$y$ is allocated a facility in $\vec{Y}$}. This can happen in one of two ways:

 \begin{description}
        \item[Case 2a:] $y$ only serves agents that belong in $C_j$ (by optimality, $y$ must be the median location of the new cluster, which implies that either $y < x_{i,l}$ and $y$ only serves $x_{i,l}$ or $x_{j,l} \leq y \leq x_{j,r}$).
        
        \item[Case 2b:] In $\vec{Y}$, $y$ serves agents that belong in both $C_{j-1}$ and $C_j$.
\end{description}
\end{description}

We next show that the cost of the original clustering $\vec{C}$ is less than the cost of clustering $\vec{Y}$ in $\vec{y}$. Hence, mechanism \opt\ would also select clustering $\vec{C}$ for $\vec{y}$, which would make $x_i$'s deviation to $y$ non-profitable. In particular, it suffices to show that:
\begin{eqnarray*}
cost(\vec{y},\vec{C}) &<& cost (\vec{y},\vec{Y})  \Leftrightarrow\\
cost(\vec{x}, \vec{C}) + d(y,\vec{C}) - d(x_i,\vec{C}) &<& cost (\vec{x},\vec{Y})+ d(y,\vec{Y}) - d(x_i,\vec{Y}) \Leftrightarrow \\
d(y,\vec{C})-d(y,\vec{Y}) &<& cost(\vec{x},\vec{Y}) - cost(\vec{x},\vec{C}) +d(x_i,\vec{C})-d(x_i,\vec{Y})
\end{eqnarray*}
Since $x_i$'s deviation to $y$ is profitable, $d(x_i,\vec{C})-d(x_i,\vec{Y})>0$. Hence, it suffices to show that:
\begin{eqnarray}
d(y,\vec{C})-d(y,\vec{Y}) &\leq& cost(\vec{x},\vec{Y}) - cost(\vec{x},\vec{C}) \nonumber\\
&=& cost(C_j,\vec{Y}) - cost(C_j,\vec{C})  + cost(\vec{x}\setminus C_j,\vec{Y}) - cost(\vec{x}\setminus C_j,\vec{C})\label{eq:pr1-target}
\end{eqnarray}

We first consider Case~1a and Case~2a, i.e., the cases where $\vec{Y}$ allocates facilities to agents of $C_j$ (between $x_{j,l}$ and $x_{j,r}$) that serve only agents in $C_j$. Note that in case 2a, $y$ can also be located outside of $C_j$ and serve only $x_{i,l}$. We can treat this case as Case~1a, since it is equivalent to placing the facility on $x_{i,l}$ and serving $y$ from there. 

In Case~1a and Case~2a, we note that \eqref{eq:pr1-target} holds if the clustering $\vec{Y}$ allocates a single facility to agents in $C_j \cup \{ y \}$, because the facility is allocated to the median of $C_j \cup \{ y \}$, hence $d(y,\vec{C})-d(y,\vec{Y}) = cost(C_j,\vec{Y}) - cost(C_j,\vec{C})$, while $cost(\vec{x}\setminus C_j,\vec{Y}) - cost(\vec{x}\setminus C_j,\vec{C}) \geq 0$, since $\vec{C}$ is optimal for $\vec{x}$. So, we focus on the most interesting case where the agents in $C_j \cup \{ y \}$ are allocated at least two facilities. We observe that \eqref{eq:pr1-target} follows from: 
\begin{align}
d(y,\vec{C}) - d(y,\vec{Y}) & \leq \tfrac{1}{\gamma} \Big( cost(\vec{x}\setminus C_j,\vec{Y}) - cost(\vec{x}\setminus C_j,\vec{C}) \Big)  \label{eq:th1_1} \\
cost(C_j,\vec{C}) - cost(C_j,\vec{Y}) & \leq
\left(1-\tfrac{1}{\gamma}\right) \Big( cost(\vec{x}\setminus C_j,\vec{Y}) - cost(\vec{x}\setminus C_j,\vec{C}) \Big)  
\label{eq:th1_2}
\end{align}

To establish \eqref{eq:th1_1} and \eqref{eq:th1_2}, we first consider the valid $\gamma$-perturbation of the original instance $\vec{x}$ where all distances between consecutive agent pairs to the left of $C_j$ (i.e. agents $\{x_1,x_2,\ldots,x_{j-1,r}\}$) and between consecutive agent pairs to the right of $C_j$ (i.e. agents $\{x_{j+1,l},\ldots,x_{k,r}\}$) are scaled down by $\gamma$. By stability, the clustering $\vec{C}$ remains the unique optimal clustering for the perturbed instance $\vec{x}'$. Moreover, since agents in $\vec{x}\setminus C_j$ are not served by a facility in $C_j$ in $\vec{C}$ and $\vec{Y}$, and since all distances outside $C_j$ are scaled down by $\gamma$, while all distances within $C_j$ remain the same, the cost of the clusterings $\vec{C}$ and $\vec{Y}$ for the perturbed instance $\vec{x}'$ is $cost(C_j, \vec{C}) + cost(\vec{x} \setminus C_j, \vec{C})/\gamma$ and $cost(C_j, \vec{Y}) + cost(\vec{x} \setminus C_j, \vec{Y})/\gamma$, respectively. Using $cost(\vec{x}', \vec{C}) < cost(\vec{x}', \vec{Y})$ and $\gamma \geq 2$, we obtain:
\begin{align}
cost(C_j,\vec{C}) - cost(C_j,\vec{Y}) & < \tfrac{1}{\gamma} \Big( cost(\vec{x}\setminus C_j,\vec{Y}) - cost(\vec{x}\setminus C_j, \vec{C}) \Big) \label{eq:pr1-perturbation-optimality} \\
& \leq \left( 1- \tfrac{1}{\gamma}\right) \Big( cost(\vec{x}\setminus C_j,\vec{Y}) - cost(\vec{x}\setminus C_j, \vec{C}) \Big) \label{eq:pr1-perturbation-final}
\end{align}

Moreover, if $C_j \cup \{ y \}$ is served by at least two facilities in $\vec{Y}$, the facility serving $y$ (and some agents of $C_j$) is placed at the median location of $\vec{Y}$'s cluster that contains $y$. Wlog., we assume that $y$ lies on the left of the median of $C_j$. Then, the decrease in the cost of $y$ due to the additional facility in $\vec{Y}$ is equal to the decrease in the cost of $x_{i,l}$ in $\vec{Y}$, which bounds from below the total decrease in the cost of $C_j$ due to the additional facility in $\vec{Y}$. Hence, %
\begin{equation}\label{eq:pr1-y-improve}
d(y,\vec{C})-d(y,\vec{Y}) \leq cost(C_j,\vec{C}) - cost(C_j,\vec{Y}) 
\end{equation}
We conclude Case~1a and Case~2a, by observing that \eqref{eq:th1_1} follows directly from \eqref{eq:pr1-y-improve} and \eqref{eq:pr1-perturbation-optimality}. 

Finally, we study Case~1b and Case~2b, i.e, the cases where some agents of $C_j$ are clustered with agents of $\vec{x}\setminus C_j$ in $\vec{Y}$. Let $C_{j1}'$ and $C_{j2}'$ denote the clusters of $(\vec{y}, \vec{Y})$ including all agents of $C_j$ (i.e., $C_j \subseteq C_{j1}'\cup C_{j2}'$). By hypothesis, at least one of $C_{j1}'$ and $C_{j2}'$ contains an agent $z \in\vec{x}\setminus C_j$. Suppose this is true for the cluster $C_{j1}'$. Then, $D(C_{j1}') > D(C_j)$, since by Corollary~\ref{c:separated-clusters}, for any $\gamma \geq (2+\sqrt{3})$, the distance of any agent $z$ outside $C_j$ to the nearest agent in $C_j$ is larger than $C_j$'s diameter. But since both $C_{j1}'$ and $C_{j2}'$ contain agents of $C_j$, we have that $d(C_{j1}',C_{j2}') < D(C_j)$. Therefore, $D(C_{j1}')> d(C_{j1}',C_{j2}')$ and the cluster-separation property is violated. Hence the resulting instance $\vec{y}$ is not $\gamma$-stable and Mechanism~\ref{algorithm:optimal} does not allocated any facilities for it. 
\qed\end{proof}

\section{A Deterministic Mechanism Resistant to Singleton Deviations}
\label{s:rightmost}

Next, we present a deterministic strategyproof mechanism for $5$-stable instances of $k$-Facility Location whose optimal clustering may include singleton clusters. To make  singleton cluster deviations non profitable, cluster merging has to be discouraged by the facility allocation rule. So, we allocate facilities near the edge of each optimal cluster, ending up with a significantly larger approximation ratio and a requirement for larger stability, in order to achieve strategyproofness. Specifically, we now need to ensure that no agent can become a singleton cluster close enough to her original location. Moreover, since agents can now gain by splitting their (true) optimal cluster, we need to ensure that such deviations are either non profitable or violate the cluster-separation property. 

\begin{algorithm}[b]
\label{algorithm:deterministic}
\DontPrintSemicolon
\SetAlgoLined
\LinesNumbered
\KwResult{An allocation of $k$-facilities}
\KwIn{A $k$-Facility Location instance $\vec{x}$.}
Find the optimal clustering $\vec{C}=(C_1,\ldots,C_k)$ of $\vec{x}$. \;

 \uIf{there are two consecutive clusters $C_i$ and $C_{i+1}$ with $\max\{ D(C_i), D(C_{i+1})\} \geq d(C_i,C_{i+1})$}{ 
 \KwOut {``FACILITIES ARE NOT ALLOCATED''.}}

\For{$i\in \{1,\ldots,k\}$}{
\uIf{$|C_i|>1$} {Allocate a facility to the location of the second rightmost agent of $C_i$, i.e., $c_i \leftarrow x_{i,r-1}$. }
\Else{
Allocate a facility to the single agent location of $C_i$: $c_i \leftarrow x_{i,l}$}
}
\KwOut{The $k$-facility allocation $\vec{c}=(c_1,\ldots,c_k)$.}
\caption{\rightm }
\end{algorithm}

\begin{theorem}\label{thm:rightmost}
\rightm\ (Mechanism~\ref{algorithm:deterministic}) is strategyproof for $5$-stable instances of $k$-Facility Location and achieves an approximation ratio of $(n-2)/2$.
\end{theorem}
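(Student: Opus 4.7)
The proof has two parts: the approximation ratio and strategyproofness under $5$-stability.

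\emph{Approximation ratio.} I bound the mechanism's cost cluster by cluster. For a non-singleton optimal cluster $C_i$ with $m_i = |C_i|$ agents at positions $x_1 < \cdots < x_{m_i}$, let $\delta_\ell = x_{\ell+1} - x_\ell$. A direct computation expresses the mechanism's cost (facility at $x_{m_i-1}$) and the optimal median cost as
\[
\text{OUR}_i = \sum_{\ell=1}^{m_i-2} \ell\,\delta_\ell + \delta_{m_i-1}, \qquad \text{OPT}_i = \sum_{\ell=1}^{\lceil m_i/2\rceil - 1} \ell\,\delta_\ell + \sum_{\ell=\lceil m_i/2\rceil}^{m_i-1} (m_i - \ell)\,\delta_\ell.
\]
Comparing coefficients of each $\delta_\ell$, the worst ratio is attained at $\ell = m_i - 2$, yielding $\text{OUR}_i \leq \max\{1, (m_i-2)/2\} \cdot \text{OPT}_i$. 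Singleton clusters contribute $0$ to both sums, and since $k \geq 2$ forces every $m_i \leq n-1$, summing (or taking the maximum) across clusters bounds the overall approximation ratio by $(n-2)/2$.

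\emph{Strategyproofness.} Fix a $5$-stable instance $\vec{x}$ with optimal clustering $\vec{C} = (C_1, \ldots, C_k)$, which by Lemma~\ref{l:separated-clusters} already passes the mechanism's cluster-separation check. Let agent $x_j \in C_i$ deviate to $x'_j$, and let $\vec{C}'$ be the optimal clustering of the declared instance $\vec{x}' = (\vec{x}_{-j}, x'_j)$. If $\vec{x}'$ fails the cluster-separation check, no facilities are allocated and the deviation is unprofitable, so assume $\vec{C}'$ satisfies it. Write $f$ for $x_j$'s facility in $M(\vec{x})$: if $|C_i| \geq 2$ then $f = x_{i,r-1}$, while if $C_i$ is a singleton then $f = x_j$ with cost $0$, making any deviation non-profitable. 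It then suffices to show that every facility in $M(\vec{x}')$ lies at distance at least $d(x_j, f)$ from $x_j$'s true location. I split the analysis by $x'_j$'s role in $\vec{C}'$.

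Scenario A: $\vec{C}'$ coincides with $\vec{C}$ after replacing $x_j$ by $x'_j$ inside $C_i$. A case analysis on $x_j$'s rank in $C_i$ (rightmost, second rightmost, or interior) shows that the new ``second rightmost of $C_i$'' either equals $f$ (cost unchanged) or lies strictly farther from $x_j$'s true location. Scenario B: $x'_j$ joins another cluster $C_\ell$, so $C'_i = C_i \setminus \{x_j\}$ and $C'_\ell = C_\ell \cup \{x'_j\}$. Depending on $x_j$'s position in $C_i$, the new facility in $C'_i$ is either $x_{i,r-1}$ itself or the original $x_{i,r-2}$, both at distance at least $d(x_j, f)$ from $x_j$. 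The new facility in $C'_\ell$ lies in $C_\ell$'s range (or coincides with $x'_j$ in boundary sub-cases where $C_\ell$ is a singleton, in which case the cluster-separation check on $\vec{x}'$ combined with Lemma~\ref{l:separated-clusters} for $\gamma = 5$ still forces it at least $d(x_j,f)$ away); invoking $d(C_i, C_\ell) > \tfrac{16}{10}\,D(C_i)$ shows this facility is also farther from $x_j$ than $f$. Scenario C: $x'_j$ becomes a singleton in $\vec{C}'$. Lemma~\ref{l:singleton_deviation_5} gives $|x_j - x'_j| > D(C_i) \geq d(x_j, f)$, so the singleton facility at $x'_j$ is worse than $f$; the remaining $k-1$ facilities serve a $(k-1)$-partition of $\vec{x}\setminus\{x_j\}$, and an argument analogous to Scenario~B, possibly involving a merger of $C_i \setminus \{x_j\}$ with an adjacent original cluster, shows no such facility is closer to $x_j$ than $f$.

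The main technical obstacle is Scenario~C, where the declared clustering topology genuinely differs from $\vec{C}$: removing $x_j$ and adding a singleton can force the remaining $k-1$ facilities to merge adjacent clusters of $\vec{C}$. I will handle it by combining the deviation lower bound of Lemma~\ref{l:singleton_deviation_5} with Lemma~\ref{l:separated-clusters} and the cluster-separation check on $\vec{x}'$ to control inter-cluster distances, and by exploiting the monotonicity of the ``second rightmost'' operator under point removal to compare candidate new facility locations against $f$ on a case-by-case basis.
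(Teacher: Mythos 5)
Your approximation-ratio argument is fine and in fact more explicit than the paper's (which dismisses it in one sentence); the per-cluster coefficient comparison giving $\max\{1,(m_i-2)/2\}$ is correct. The strategyproofness part, however, has a genuine gap: your taxonomy by ``$x'_j$'s role in $\vec{C}'$'' (stays put / joins another original cluster / becomes a singleton) is not exhaustive, and the omitted configurations are exactly where the mechanism's design is tested. Two families are missing. First, the \emph{merge} case: $x'_j$ remains clustered with the rest of $C_i$, but that cluster now also absorbs agents of $C_{i+1}$ (or $C_{i-1}$). This is dangerous precisely for the second-rightmost rule: if the merged cluster were $C_i\cup\{x'_j\}\cup\{x_{i+1,l}\}$, its second rightmost agent could be $x_{i,r}$ itself, giving the rightmost agent cost $0$. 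The paper's proof (Case~2) kills this by arguing that such a partial absorption forces $C_{i+1}$ to be split --- which is then caught by the cluster-separation check --- and that absorbing \emph{all} of $C_{i+1}$ pushes the second-rightmost facility deep into $C_{i+1}$, at distance $> D(C_i)$ by Lemma~\ref{l:separated-clusters}. Second, the \emph{split} case: $C_i$'s agents end up served by two different facilities in $\vec{C}'$. Here one must show either that some new cluster simultaneously has diameter $> D(C_i)$ and distance $< D(C_i)$ to its neighbour (so the declared instance fails the separation check and no facilities are allocated), or that assigning two facilities to $C_i\cup\{x'_j\}$ and only $k-2$ to the rest contradicts the optimality of $\vec{C}$. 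Neither argument appears in your sketch; Scenario~C's parenthetical ``possibly involving a merger'' gestures at part of this but does not supply the optimality/separation reasoning that makes it go through.

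Within the scenarios you do treat, the skeleton matches the paper's: Lemma~\ref{l:singleton_deviation_5} for singleton deviations, Lemma~\ref{l:separated-clusters} to push facilities of other clusters out of reach, and the separation check as the enforcement device. To close the proof you need to (i) restate the case split as the paper does --- by how the \emph{set} $C_i$ fares in $\vec{C}'$ (intact, merged, or split) rather than by where $x'_j$ lands --- and (ii) supply the optimality argument showing that a set of agents needing one facility in $\vec{x}$ cannot optimally need two in $(\vec{x}_{-j},x'_j)$ while the complement loses one, which is what rules out ``internal'' splits that the separation check would not see.
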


\begin{proof}
The approximation ratio of $(n-2)/2$ follows directly from the fact that the mechanism allocates the facility to the second rightmost agent of each non-singleton optimal cluster.

\vspace*{-0.01mm}As for strategyproofness, let $\vec{x}$ denote the true instance and $\vec{C} = (C_1,\ldots,C_k)$ its optimal clustering. We consider an agent $x_i\in C_j$ deviating to location $y$, resulting in an instance $\vec{y} =(\vec{x}_{-i}, y)$ with optimal clustering $\vec{Y}$. Agent $x_i$'s cost is at most $\Diam(C_j)$. Agent $x_{i}$ could profitably declare false location $y$ in the following ways:
\begin{description}
    \item[Case 1:] The agents in $C_j$ are clustered together in $\vec{Y}$ and $y$ is allocated a facility with $d(y,x_{i}) < d(x_{i}, x_{i,r-1}) \leq \Diam(C_j)$ ($x_{i,r-1}$ is the location of $x_{i}$'s facility, when she is truthful).
    
\begin{description}
    \item[Case 1a:] $y$ is a singleton cluster and $d(y,x_{i}) < \Diam(C_j)$. For $5$-stable instances, Lemma~\ref{l:singleton_deviation_5} implies that $x_i \in C_j$ has to move by at least $\Diam(C_j)$ to become a singleton cluster, a contradiction.
    
   \item[Case 1b:] $y$ is the second rightmost agent of a cluster $C_{j}'$ in $(\vec{y}, \vec{Y})$. Then, the agent $x_i$ can gain only if $d(y, x_{i}) < \Diam(C_j)$. In Case~1, the agents in $C_j$ are clustered together in $\vec{Y}$. If $y < x_i$, $y$ must be the second rightmost agent of a cluster on the left of $x_{j,l}$ and by  Lemma~\ref{l:separated-clusters}, $d(x_i, y) \geq d(x_{j,l}, x_{j-1,r}) > \Diam(C_j)$. Hence, such a deviation cannot be profitable for $x_i$ (note how this case crucially uses the facility allocation to the second rightmost agent of a cluster). If $y > x_i$, $x_i$ can only gain if $y$ is the second rightmost agent of a cluster including $C_j \cup \{ y, x_{j+1,l} \}$ and possibly some agents on the left of $C_j$, which is treated below.
\end{description}

    \item[Case 2:] The agents in $C_j$ are clustered together in $\vec{Y}$ and $C_j$ is merged with some agents from $C_{j+1}$ and possibly some other agents to the left of $x_{j,l}$ (note that merging $C_j$ only with agents to the left of $x_{j,l}$ does not change the facility of $x_i$). Then, we only need to consider the case where the deviating agent $x_i$ is $x_{j,r}$, since any other agent to the left of $x_{j,r-1}$ cannot gain, because cluster merging can only move their serving facility further to the right. As for $x_{j,r}$, we note that by optimality and the hypothesis that agents in $C_j$ belong in the same cluster of $\vec{Y}$, $x_{i,r}$ cannot cause the clusters $C_j$ and $C_{j+1}$ to merge in $\vec{Y}$ by deviating in the range $[x_{j,r},x_{j+1,l}]$. The reason is that the set of agents $(C_i\setminus \{x_{j,r}\}) \cup \{ y \} \cup C_{j+1}$ cannot be served optimally by a single facility, when the set of agents $C_j \cup C_{j+1}$ requires two facilities in the optimal clustering $\vec{C}$. Hence, unless $C_{j+1}$ is split in $\vec{Y}$ (which is treated similarly to Case~3a), $x_{j,r}$ can only move her facility to $C_{j+1}$, which is not profitable for her, due to Lemma~\ref{l:separated-clusters}. 

     \item[Case 3:] $C_j$ is split into two clusters in $\vec{Y}$. Hence, the leftmost agents, originally in $C_j$, are served by a different facility than the rest of the agents originally in $C_j$. We next show that in any profitable deviation of $x_{i}$ where $C_j$ is split, either the deviation is not feasible or the cluster-separation property is violated. The case analysis below is similar to the proof of Theorem~\ref{th:optimal}. 

\begin{description}
\item[Case 3a:] Agents in $C_j$ are clustered together with some agents of $\vec{x}\setminus C_j$ in $\vec{Y}$. By hypothesis, there are agents $z, w \in C_j$ placed in different clusters of $\vec{Y}$, and at least one of them, say $z$, is clustered together with an agent $p \in C_\ell$, with $\ell \neq j$, in $\vec{Y}$. For brevity, we refer to the (different) clusters in which $z$ and $w$ are placed in clustering $\vec{Y}$ as $C_z'$ and $C_w'$, respectively. Then, $\Diam(C_z') \geq d(p,z) > \Diam(C_j)$, by Lemma~\ref{l:separated-clusters}. But also $d(C_z', C_w') < d(z,w) \leq \Diam(C_j)$, and consequently, $\Diam(C_z') > d(C_z', C_w')$, which implies that the cluster-separation property is violated and Mechanism~\ref{algorithm:deterministic} does not allocate any facilities in this case. 

\item[Case 3b:] Agents in $C_j$ are split and are not clustered together with any agents of $\vec{x}\setminus C_j$ in $\vec{Y}$. Hence, $y$ is not clustered with any agents in $\vec{x}\setminus C_j$ in $\vec{Y}$. Otherwise, i.e., if $y$ is not clustered with agents of $C_j$ in $\vec{Y}$, it would be suboptimal for clustering $\vec{Y}$ to allocate more than one facility to agents of $C_j \setminus \{ x_i \}$ and at most $k-2$ facilities to $(\vec{x} \cup \{ y \}) \setminus C_j$, while the optimal clustering $\vec{C}$ allocates a single facility to $C_j$ and $k-1$ facilities to $\vec{x} \setminus C_j$. But again if $y$ is only clustered with agents of $C_j$, it is suboptimal for clustering $\vec{Y}$ to allocate more than one facility to agents of $(C_j \cup \{ y \}) \setminus \{ x_i \}$ and at most $k-2$ facilities to $\vec{x} \setminus C_j$, while the optimal clustering $\vec{C}$ allocates a single facility to $C_j$ and $k-1$ facilities to $\vec{x} \setminus C_j$, as shown in the proof of Theorem~\ref{th:optimal}.\qed
\end{description}\end{description}
\end{proof}

\section{Low Stability and Inapproximability by Deterministic Mechanisms}
\label{s:lower_bound}

We next extend the impossibility result of \cite[Theorem~3.7]{FT12} to $\sqrt{2}$-stable instances of $k$-Facility Location on the line, with $k \geq 3$. Thus, we provide strong evidence that restricting our attention to stable instances does not make strategyproof mechanism design trivial. 

\subsection{Image Sets, Holes and Well-Separated Instances}

We start with some basic facts about strategyproof mechanisms and by adapting the technical machinery of well-separating instances from \cite[Section~2.2]{FT12} to stable instances. 

\smallskip\noindent{\bf Image Sets and Holes.}
Given a mechanism $M$, the \emph{image set} $I_i(\vec{x}_{-i})$ of an agent $i$ with respect to an instance $\vec{x}_{-i}$ is the set of facility
locations the agent $i$ can obtain by varying her reported location. Formally,
\( I_i(\vec{x}_{-i}) = \{ a \in \reals: \exists y \in \reals \mbox{ with } M(\vec{x}_{-i}, y) = a \} \).

If $M$ is strategyproof, any image set $I_i(\vec{x}_{-i})$ is a collection of closed intervals (see e.g., \cite[p.~249]{SV07}). Moreover, a strategyproof mechanism $M$ places a facility at the location in $I_i(\vec{x}_{-i})$ nearest to the declared location of agent $i$. Formally, for any agent $i$, all instances $\vec{x}$, and all locations $y$, 
\( d(y, M(\vec{x}_{-i}, y)) =
\inf_{a \in I_i(\vec{x}_{-i})}\{d(y, a)\} \).

Some care is due, because we consider mechanisms that need to be strategyproof only for $\gamma$-stable instances $(\vec{x}_{-i}, y)$. The image set of such a mechanism $M$ is well defined (possibly by assuming that all facilities are placed to essentially $+\infty$), whenever $(\vec{x}_{-i}, y)$ is not $\gamma$-stable. Moreover, the requirement that $M$ places a facility at the location in $I_i(\vec{x}_{-i})$ nearest to the declared location $y$ of agent $i$ holds only if the resulting instance $(\vec{x}_{-i}, y)$ is stable. We should underline that all instances considered in the proof of Theorem~\ref{thm:lower_bound} are stable (and the same holds for the proofs of the propositions adapted from \cite[Section~2.2]{FT12}). 

Any (open) interval in the complement of an image set $I \equiv I_i(\vec{x}_{-i})$ is called a \emph{hole} of $I$. Given a location $y \not\in I$, we let $l_y = \sup_{a \in I} \{ a < y\}$ and $r_y = \inf_{a \in I} \{ a > y\}$ be the locations in $I$ nearest to $y$ on the left and on the right, respectively. Since $I$ is a collection of closed intervals, $l_y$ and $r_y$ are well-defined and satisfy $l_y < y < r_y$. For convenience, given a $y \not\in I$, we refer to the interval $(l_y, r_y)$ as a $y$-hole in $I$.

\smallskip\noindent{\bf Well-Separated Instances.}
Given a deterministic strategyproof mechanism $M$ with a bounded approximation $\rho \geq 1$ for $k$-Facility Location, an instance $\vec{x}$ is $(x_1|\cdots|x_{k-1}|x_k,x_{k+1})$-\emph{well-separated} if $x_1 < \cdots < x_k < x_{k+1}$ and 
\( \rho d(x_{k+1}, x_{k}) < \min_{i \in \{2, \ldots, k\}} \{ d(x_{i-1},  x_i) \} \).
We call $x_k$ and $x_{k+1}$ the \emph{isolated pair} of the well-separated instance $\vec{x}$.

Hence, given a $\rho$-approximate mechanism $M$ for $k$-Facility Location, a well-separated instance includes a pair of nearby agents at distance to each other less than $1/\rho$ times the distance between any other pair of consecutive agents. Therefore, any $\rho$-approximate mechanism serves the two nearby agents by the same facility and serve each of the remaining ``isolated'' agents by a different facility. We remark that well-separated instances are also $\rho$-stable. 

We are now ready to adapt some useful properties of well-separated instances from \cite[Section~2.2]{FT12}. It is not hard to verify that the proofs of the auxiliary lemmas below apply to $\sqrt{2}$-stable instances, either without any change or with some minor modifications (see also \cite[Appendix~A]{FT12}). For completeness, we give the proofs of the lemmas below in Appendix~\ref{s:app:well-separated}. 

\begin{lemma}[Proposition~2.2, \cite{FT12}]\label{l:nice-allocation}
Let $M$ be any deterministic startegyproof mechanism with a bounded approximation ratio $\rho \geq 1$. For any $(x_1|\cdots|x_{k-1}|x_k,x_{k+1})$-well-separated instance $\vec{x}$, $M_k(\vec{x}) \in [x_k, x_{k+1}]$.
\end{lemma}

\begin{lemma}[Proposition 2.3, \cite{FT12}]\label{l:movingRight}
Let $M$ be any deterministic startegyproof mechanism with a bounded approximation ratio $\rho \geq 1$, and let $\vec{x}$ be a $(x_1|\cdots|x_{k-1}|x_k,x_{k+1})$-well-separated instance with $M_k(\vec{x}) = x_k$. Then, for every $(x_1|...|x_{k-1}|x'_k,x'_{k+1})$-well-separated instance $\vec{x}'$ with $x'_k \geq x_{k}$, $M_k(\vec{x}') = x'_k$.
\end{lemma}

\begin{lemma}[Proposition 2.4, \cite{FT12}]\label{l:movingLeft}
Let $M$ be any deterministic startegyproof mechanism with a bounded approximation ratio $\rho \geq 1$, and let $\vec{x}$ be a $(x_1|\cdots|x_{k-1}|x_k,x_{k+1})$-well-separated instance with $M_k(\vec{x}) = x_{k+1}$. Then, for every $(x_1|...|x_{k-1}|x'_k,x'_{k+1})$-well-separated instance $\vec{x}'$ with $x'_{k+1} \leq x_{k+1}$, $M_k(\vec{x}') = x'_{k+1}$.
\end{lemma}

\subsection{The Proof of the Impossibility Result}

We are now ready to establish the main result of this section. The proof of the following builds on the proof of \cite[Theorem~3.7]{FT12}. However, we need some additional ideas and to be way more careful with the agent deviations used in the proof, since our proof can only rely on $\sqrt{2}$-stable instances. 

\begin{theorem}\label{thm:lower_bound}
For every $k \geq 3$ and any $\delta > 0$, any deterministic anonymous strategyproof mechanism for $(\sqrt{2}-\delta)$-stable instances of $k$-Facility Location on the real line with $n \geq k+1$ agents has an unbounded approximation ratio. 
\end{theorem}

\begin{proof}
We only consider the case where $k = 3$ and $n = 4$. It is not hard to verify that the proof applies to any $k \geq 3$ and $n \geq k+1$. To reach a contradiction, let $M$ be any deterministic anonymous strategyproof mechanism for $(\sqrt{2}-\delta)$-stable instances of $3$-Facility Location with $n = 4$ agents and with an approximation ratio of $\rho \geq 1$. 

We consider a $(x_1|x_2|x_3,x_4)$-well-separated instance $\vec{x}$. For a large enough $\lambda \gg \rho$ and a very large (practically infinite) $B \gg 6\rho\lambda$, we let $\vec{x} = (0, \lambda, 6B+\lambda, 6B+\lambda+\eps)$, for some small enough $\eps > 0$ ($\eps \ll \lambda/\rho$). By choosing $\lambda$ and $\eps$ appropriately, becomes the instance $\vec{x}$ $\gamma$-stable, for $\gamma \gg \sqrt{2}$. 

By Lemma~\ref{l:nice-allocation}, $M_3(\vec{x}) \in [ x_3, x_4 ]$. Wlog, we assume that $M_3(\vec{x}) \neq x_3$ (the case where $M_3(\vec{x}) \neq x_4$ is fully symmetric and requires Lemma~\ref{l:movingRight}). Then, by moving agent $4$ to $M_3(\vec{x})$, which results in a well-separated instance and, by strategyproofness, requires that $M$ keeps a facility there, we can assume wlog. that $M_3(\vec{x}) = x_4$. 

Since $\vec{x}$ is well-separated and $M$ is $\rho$-approximate, both $x_3$ and $x_4$ are served by the facility at $x_4$. Hence, there is a $x_3$-hole $h = (l, r)$ in the image set $I_3(\vec{x}_{-3})$. Since $M(\vec{x})$ places a facility at $x_4$ and not in $x_3$, the right endpoint $r$ of $h$ lies between $x_3$ and $x_4$, i.e. $r \in (x_3, x_4]$. Moreover, since $M$ is $\rho$-approximate and strategyproof for $(\sqrt{2}-\delta)$-stable instances, agent $3$ should be served by a facility at distance at most $\rho\lambda$ to her, if she is located at $4B$. Hence, the left endpoint of the hole $h$ is $l > 3B$. We distinguish two cases based on the distance of the left endpoint $l$ of $h$ to $x_4$. 

\noindent\textbf{Case 1: $x_4 - l > \sqrt{2}\lambda$.}
We consider the instance $\vec{y} = (\vec{x}_{-3}, a)$, where $a > l$ is arbitrarily close to $l$ (i.e., $a \gtrsim l$) so that $d(a, x_4) = \sqrt{2}\lambda$. Since $d(x_1, x_2) = \lambda$, $d(x_2, a)$ is quite large, and $d(a, x_4) = \sqrt{2}\lambda$, the instance $\vec{y}$ is $(\sqrt{2}-\delta)$-stable, for any $\delta > 0$. By strategyproofness, $M(\vec{y})$ must place a facility at $l$, since $l \in I_3(\vec{x}_{-3})$.

Now, we consider the instance $\vec{y}' = (\vec{y}_{-4}, l)$. Since we can choose $a > l$ so that $d(l, a) \ll \lambda$, the instance $\vec{y}'$ is $(x_1 | x_2 | l, a)$-well-separated and $(\sqrt{2}-\delta)$-stable. Hence, by strategyproofness, $M(\vec{y}')$ must keep a facility at $l$, because $l \in I_4(\vec{y}_{-4})$. 

Then, by Lemma~\ref{l:movingLeft}, $y'_4 = a \in M(\vec{y}')$, because for  
the $(x_1 | x_2 | x_3, x_4)$-well-separated instance $\vec{x}$,  $M_3(\vec{x}) = x_4$, and $\vec{y}'$ is a $(x_1 | x_2 | l, a)$-well-separated instance with $y'_4 \leq x_4$. Since both $l, a \in M(\vec{y}')$, either agents $1$ and $2$ are served by the same facility of $M(\vec{y}')$ or agent $2$ is served by the facility at $l$. In both cases, the social cost of $M(\vec{y}')$ becomes arbitrarily larger than $a-l$, which is the optimal social cost of the $3$-Facility Location instance $\vec{y}'$. 

\noindent\textbf{Case 2: $x_4 - l \leq \sqrt{2}\lambda$.}
This case is similar to Case~1, but it requires a bit more careful further case analysis. The details can be found in Appendix~\ref{s:app:lower_bound}. 
\qed\end{proof}

\begin{algorithm}[t]
\DontPrintSemicolon
\SetAlgoLined
\LinesNumbered
\KwResult{An allocation of $k$-facilities}
\KwIn{A $k$-Facility Location instance $\vec{x}$.}
Find the optimal clustering $\vec{C}=(C_1,\ldots,C_k)$ of $\vec{x}$. \;

 \uIf{there are two consecutive clusters $C_i$ and $C_{i+1}$ with $1.6\cdot \max\{ D(C_i), D(C_{i+1})\} \geq  d(C_i,C_{i+1}) $}{ 
 \KwOut {``FACILITIES ARE NOT ALLOCATED''.}}

\For{$i\in \{1,\ldots,k\}$}{
Allocate the facility to an agent $c_i$ selected uniformly at random from the agents of cluster $C_i$}
\KwOut{The $k$-facility allocation $\vec{c}=(c_1,\ldots,c_k)$.}
\caption{\rand }
\label{algorithm:randomized}
\end{algorithm}

\section{A Randomized Mechanism with Constant Approximation}
\label{s:random}

In this section, we show that for an appropriate stability, a simple randomized mechanism is strategyproof, can deal with singleton clusters and achieves an approximation ratio of $2$.

The intuition is that the \rightm\ mechanism can be easily transformed to a randomized mechanism, using the same key properties to guarantee strategyproofness, but achieving an $O(1)$-approximation, as opposed to $O(n)$-approximation of \rightm. Specifically, \rand\ (see also Mechanism~\ref{algorithm:randomized}) again finds the optimal clusters, but then places a facility at the location of an agent selected uniformly at random from each optimal cluster. We again use cluster-separation property, as a necessary condition for stability of the optimal clustering. The stability properties required to guarantee strategyproofness are very similar to those required by \rightm, because the set of possible profitable deviations is very similar for \rightm\ and \rand.  Finally, notice that the cluster-separation property step of \rand\ (step 2) now makes use that due to Lemma~\ref{l:separated-clusters}, it must be $1.6\cdot \max\{ D(C_i), D(C_{i+1})\} <  d(C_i,C_{i+1})$ for $5$-stable instances.




\begin{theorem}\label{thm:random}
\rand\ (Mechanism~\ref{algorithm:randomized}) is strategyproof and  achieves an approximation ratio of $2$ for $5$-stable instances of $k$-Facility Location on the line. 
\end{theorem}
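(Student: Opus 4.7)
The approximation ratio follows from a standard triangle inequality against the optimal median. For each optimal cluster $C_i$ with median $c_i^\ast$ and any $x_j\in C_i$, when $c\in C_i$ is sampled uniformly,
\[
\Exp_c[d(x_j,c)] \;=\; \frac{1}{|C_i|}\sum_{z\in C_i}d(x_j,z) \;\leq\; d(x_j,c_i^\ast) + \frac{1}{|C_i|}\sum_{z\in C_i}d(z,c_i^\ast).
\]
Summing over $x_j\in C_i$ yields $\Exp[\SC(\vec x, C_i)]\leq 2\,\SC^\ast(C_i)$, and summing over all $i$ gives the $2$-approximation.

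For strategyproofness I plan to mimic the case analysis of Theorem~\ref{thm:rightmost}. Fix a $5$-stable instance $\vec x$ with optimal clustering $\vec C=(C_1,\dots,C_k)$; let $x_i\in C_j$ deviate to $y$, let $\vec Y$ be the optimal clustering of $\vec y=(\vec x_{-i},y)$, and let $C'$ be the cluster of $\vec Y$ containing $y$. Writing $\bar d := \tfrac{1}{|C_j|}\sum_{z\in C_j}d(x_i,z)\leq\Diam(C_j)$ for $x_i$'s truthful expected cost, I would split into four sub-cases paralleling those in Theorem~\ref{thm:rightmost}. First, if $C'=(C_j\setminus\{x_i\})\cup\{y\}$, the new expected cost equals $\bar d + d(x_i,y)/|C_j|\geq\bar d$. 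Second, if $C'=\{y\}$, Lemma~\ref{l:singleton_deviation_5} gives $d(x_i,y)>\Diam(C_j)\geq\bar d$, and the resulting cost $d(x_i,y)$ exceeds $\bar d$. Third, if $C_j$ is split in $\vec Y$, I reuse the argument of Case~3 in Theorem~\ref{thm:rightmost}: either a cluster of $\vec Y$ spans agents from two different original clusters, which violates the cluster-separation property and causes the mechanism to refuse to allocate, or it does not, in which case optimality of $\vec Y$ contradicts that of $\vec C$ via the argument of Theorem~\ref{th:optimal}. Fourth, if $C'$ contains ``extras'' $T$ drawn from clusters other than $C_j$, Lemma~\ref{l:separated-clusters} at $\gamma=5$ yields $d(x_i,z)>\tfrac{8}{5}\Diam(C_j)$ for each $z\in T$, so a direct averaging inequality shows that $x_i$'s new expected cost exceeds $\bar d$.

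I expect the main obstacle to be the split case: as in Theorem~\ref{thm:rightmost}, the argument must thread the strict $\gamma$-center-proximity bound of Proposition~\ref{pr:separation} against the optimality of the original clustering to ensure the cluster-separation test fires on every profitable split, and the bookkeeping there is delicate. The averaging argument in the fourth case is mostly mechanical thanks to the comfortable $\tfrac{8}{5}$ slack in Lemma~\ref{l:separated-clusters} at $\gamma=5$; when $C_j$ is fully merged into $C'$, the slack forces the new expected cost to be at least $\bigl(|C_j|\bar d + \tfrac{8}{5}|T|\Diam(C_j)\bigr)/(|C_j|+|T|)\geq\bar d$ immediately. Some additional care is needed in the boundary sub-case where $C'\cap(C_j\setminus\{x_i\})=\emptyset$ and $|T|=1$ (so $C'=\{y,z\}$ for a single extra $z$): there I would combine the averaging bound with the cluster-separation check on $\vec Y$, which restricts $\Diam(C')=d(y,z)$, to argue that $y$ must still lie far enough from $x_i$ that the two-term average exceeds $\bar d$.
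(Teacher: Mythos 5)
Your approximation argument and your sub-cases 1--3 track the paper's proof (which lives in Appendix~\ref{app:thm:random}) closely enough, including sharing the same mild looseness in the singleton case about the minimum over all facilities. The problem is sub-case 4, which is precisely where the paper's proof does all of its real work, and your ``direct averaging inequality'' does not close it. The dangerous deviation is the one where $y$ joins an \emph{adjacent} optimal cluster (say $C_{j-1}$, so $T = C_{j-1}$ and $C' \cap (C_j\setminus\{x_i\}) = \emptyset$) while $C_j\setminus\{x_i\}$ keeps its own facility. Then $x_i$'s expected cost is $\Exp[\min(X',X'')]$, where $X'$ is the distance to the (uniformly random) facility of the cluster containing $C_j\setminus\{x_i\}$ and $X''$ is the distance to the facility of $C'$. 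Averaging over $C'$ alone lower-bounds $\Exp[X'']$, not $\Exp[\min(X',X'')]$; and conditioning on where the facility of $C'$ lands gives only
\[
\Exp[\min(X',X'')] \;\geq\; \frac{|T|}{|T|+1}\cdot\frac{S}{|C_j|-1} \,+\, \frac{1}{|T|+1}\cdot \min\Bigl(d(x_i,y),\,\cdot\Bigr),
\]
where $S=\sum_{z\in C_j}d(x_i,z)$. If $d(x_i,y)$ is small this is below $\bar d = S/|C_j|$ whenever $|T| < |C_j|-1$, so the deviation is \emph{a priori} profitable for any $|T|$, not just $|T|=1$. This is exactly the case the paper spends most of its proof on (its ``Case~1: becoming a member of another cluster''): it first shows a gain requires $d(y,x_i)<\Diam(C_j)$ and $y\in C_{j\pm1}'$, and then runs a delicate geometric argument (re-using the tight configuration from the proof of Lemma~\ref{l:separated-clusters}, i.e., where the optimal center of $C_j$ sits so that $d(C_{j-1},C_j)=1.6\Diam(C_j)$ is attained) to show that any such configuration violates the separation test either between $C_{j-1}'$ and $C_j'$ or between $C_{j-1}'$ and $C_{j-2}'$.

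Relatedly, you never use the fact that \rand's step~2 checks the \emph{strengthened} condition $1.6\cdot\max\{D(C_i),D(C_{i+1})\}\geq d(C_i,C_{i+1})$ rather than the factor-$1$ check of \rightm; that stronger threshold is exactly what makes the paper's ``the test fires'' argument go through for these adjacent-cluster deviations, and the paper explicitly flags it in the description of Mechanism~\ref{algorithm:randomized}. So the structure of your proof is the right one, but the case you dismiss as ``mostly mechanical'' is the crux, your averaging bound is insufficient there for every $|T|$ with $C'\cap(C_j\setminus\{x_i\})=\emptyset$, and the missing ingredient is the geometric violation-of-separation argument tied to the $1.6$ threshold.
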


\begin{proof}[Sketch.] We present here the outline of the proof. The full proof can be found in Appendix~\ref{app:thm:random}. 
The approximation guarantee is straightforward to verify. As mentioned, the proof of strategyproofness is smilar to the proof of Theorem~\ref{thm:rightmost}. In general, we need to cover the key deviation cases, which include the following:

\begin{description}
\item[Case 1:] why agent deviating agent $x\in C_i$ cannot gain by becoming a member of another cluster, 
\item[Case 2:] or by becoming a self serving center, 

\item[Case 3:] or by merging or splitting $C_i$.
\end{description}

Cases 2 and 3 can be immediately derived from the proof of Theorem~\ref{thm:rightmost}.

The most interesting case is Case~1: $x_i$ deviates to $x'$ to be clustered together with agents from a different cluster of $\vec{C}$, in order to gain, without splitting $C_i$ (again consider $\vec{C}=(C_1,..., C_k)$ the optimal clustering of original instance $\vec{x}$ and $\vec{C}'$ the optimal clustering of instance $\vec{x}'=(\vec{x}_{-i},x')$).

By analyzing the expected value of agent $x_i$ in both clusterings $\vec{C}$ and $\vec{C}'$ we show that in order for her to be able to gain from such a deviation, it must be $d(x',x_i)<\Diam(C_i)$ and $x'$ is clustered together with agents in $C_{i-1}$ or $C_{i+1}$, suppose $C_{i-1}$ w.l.o.g. Since agents in $C_i\setminus x_i$ are not split in clustering $\vec{C}'$, we know they form cluster $C_{i'}'\in \vec{C}'$. Hence, in this case $x\in C_{i'-1}'$. The key to the proof is to show that since $d(x',x_i)<\Diam(C_i)$ then clustering $\vec{C}'$ on instance $\vec{x}'$ violates the cluster separation property verification step, either between clusters $C_{i'}'$ and $C_{i'-1}'$ or between clusters $C_{i'-1}'$ and $C_{i'-2}'$. This is also the reason why in this case the cluster separation property verification step needs to be more precise, for $5$-stable instances, as mentioned in the description of the algorithm.
\qed\end{proof}

%

\bibliographystyle{splncs04}
\bibliography{mechdesign}

\newpage\appendix

\section{The Proof of Lemma~\ref{l:separated-clusters}}
\label{s:app:separated-clusters}

\begin{proof}
It suffices to establish the lemma for two consecutive clusters $C_i$ and $C_{i+1}$. 
We recall that $d(C_i, C_{i+1}) = d(x_{i,r}, x_{i+1,l})$. Moreover, by symmetry, we can assume wlog. that $\Diam(C_i) \geq \Diam(C_{i+1})$. 

If $C_i$ is a singleton, $\Diam(C_i) = 0$ and the lemma holds trivially. If $|C_i| = 2$, wlog. we can only consider the case where $x_{i,l}$ is $C_i$'s center. Otherwise, i.e., if $x_{i,r}$ is $C_i$'s center in optimal clustering $(C_1,\ldots,C_i,\ldots,C_k)$ with centers $(c_i,\ldots,x_{i,r},\ldots,c_j)$, the same clustering $(C_1,\ldots,C_i,\ldots,C_k)$ with centers $(c_1,\ldots,x_{i,l},\ldots,c_j)$ is also optimal for the $\gamma$-stable instance $\vec{x}$ (and should still be optimal after a $\gamma$ perturbation of $\vec{x}$, due to the stability of the instance). We then have:
\begin{equation*}
\begin{split}
 \Diam(C_i) = d(x_{i,l} , x_{i,r}) =d(c_i, x_{i,r}) &< \frac{1}{(\gamma - 1)} d(x_{i,r},x_{i+1,r}) = \frac{1}{(\gamma -1)} d(C_i,C_{i+1}) \Rightarrow \\ 
 d(C_i,C_{i+1})&>(\gamma-1)\Diam(C_i) 
 \end{split}
\end{equation*}

where the first inequality follows from Proposition~\ref{pr:separation}. The lemma then follows by noticing that for any $\gamma\geq 1$:

\[ \gamma-1\geq \frac{\gamma^2+1}{2\gamma } -1 \]

The most interesting case is where $|C_i| \geq 3$ and $x_{i,l} < c_i \leq x_{i,r}$. Suppose $d(x_{i,l}, c_i) = \beta\Diam(C_i)$, for some $\beta \in (0, 1]$ and hence  $d(c_i, x_{i,r}) = (1-\beta)\Diam(C_i)$ (i.e., $\beta$ quantifies how close $c_i$ is to $C_i$'s extreme points and to the closest point of $C_{i+1}$.) We recall that $d(C_i,C_{i+1}) = d(x_{i,r},x_{i+1,l})$.

We start with a tighter analysis of the equivalent of Proposition~\ref{pr:separation} for $x_{i,l}$ and $x_{i+1,l}$, taking into account their specific ordering on the line:
\begin{equation*}
\begin{split}
 d(x_{i,l},x_{i+1,l}) & \geq d(x_{i,l},c_{i+1}) - d(x_{i+1,l}, c_{i+i}) \\
& > \gamma d(x_{i,l},c_i) - \frac{d(x_{i+1,l}, c_i)}{\gamma} \\
& = \gamma d(x_{i,l},c_i) -  \frac{d(x_{i+1,l}, x_{i,l}) - d(x_{i,l},c_i)}{\gamma} \Rightarrow\\
 d(x_{i,l}, x_{i+1,l}) &> \frac{\gamma^2 +1 }{\gamma +1}d(x_{i,l},c_i)
\end{split}
\end{equation*}

Where the second inequality stands due to the \emph{$\gamma$-center proximity property} of $\gamma$ stable instances and the equality stands because $x_{i,l} < c_i < x_{i+1,l}$. Since $d(C_i,C_{i+1}) = d(x_{i,r},x_{i+1,l})= d(x_{i,l},x_{i+1,l})-\Diam(C_i)$, and by $d(x_{i,l},c_i)=\beta\Diam(C_i)$, we get that:

\begin{equation}\label{eq:sepbound_1}
    d(C_i,C_{i+1})>\Big(\frac{\beta(\gamma^2+1)}{\gamma+1} - 1\Big)\Diam(C_i)
\end{equation}

Furthermore, by Proposition\ref{pr:separation}, we have that $d(x_{i,r},x_{i+1,l}) > (\gamma-1)d(x_{i,r},c_i)$. Hence, by $d(c_i, x_{i,r}) = (1-\beta)\Diam(C_i)$, we get that:
\begin{equation}\label{eq:sepbound_2}
    d(C_i,C_{i+1})>(1-\beta)(\gamma-1)\Diam(C_i)
\end{equation}

So, by (\ref{eq:sepbound_1}) and (\ref{eq:sepbound_2}) we have that it must be:
\begin{equation}\label{eq:sepmax}
    d(C_i,C_{i+1})>\max\Big\{\frac{\beta(\gamma^2+1)}{\gamma+1} - 1, (1-\beta)(\gamma-1)\Big\}D(C_i)
\end{equation}

We now observe that for any fixed $\gamma > 1$, the first term of the max in \eqref{eq:sepmax}, $\frac{\beta(\gamma^2+1)}{\gamma+1} - 1$, is increasing for all $\beta > 0$, while the second term, $(1-\beta)(\gamma-1)$, is decreasing for all $\beta \in (0, 1]$. Hence,
for any fixed $\gamma > 1$, the minimum value of the max in \eqref{eq:sepmax} is achieved when $\beta$ satisfies:
\begin{equation*}
    \frac{\beta(\gamma^2+1)}{\gamma+1} - 1 =  (1-\beta)(\gamma-1)
\end{equation*}
Solving for $\beta$, we get that: 
\begin{equation}\label{eq:seprightc}
    \beta= \frac{1}{2}+\frac{1}{2\gamma}\,,
\end{equation}
with $\beta \in (1/2, 1]$, when $\gamma\geq 1$.

We conclude the proof by substituting the value of $\beta$ in \eqref{eq:seprightc} to \eqref{eq:sepmax}. 
%
\qed\end{proof}

\section{The Proof of Lemma~\ref{3_stab_lemma}}
\label{app:s:3_stab_lemma}

\begin{proof}
We establish the lemma for the leftmost agent $x_{i,l}$ as the deviating agent. Specifically, we show that $x_{i,l}$ needs to move by at least $d(x_{i,l},c_i)$ to the left in order to become a singleton cluster. The property then follows for the rest of the agents.

Suppose $x_{i,l}$ can create a singleton cluster by deviating less than $d(x_{i,l},c_i)$ to the left. I.e., for some $x'$ such that $d(x', x_{i,l}) < d(x_{i,l},c_i)$ the optimal clustering of $\vec{x}' = (\vec{x}_{-x_{i,l}}, x')$ is such that the agent location at $x'$ becomes a singleton cluster. We call this clustering (that is optimal for $\vec{x}'$) $\vec{C}'$. Notice that since $d(x', x_{i,l}) < d(x_{i,l},c_i)$, $x'$ is in the gap between clusters $C_{i-1}$ and $C_i$ as by $3$-perturbation stability we have $d(x_{i-1,r},x_{i,l}) > 2d(x_{i,l},c_i)$. This means that in order for this case to be feasible, no agents from $C_{i-1}$ can be clustered together with agents in $C_{i}$ in $(\vec{x}', \vec{C}')$, because $x'$ lies between them and is a singleton cluster.

Consider now the instance $\vec{x}_{-x_{i,l}}$. We know that $cost(\vec{x}_{-x_{i,l}}, \vec{C}')\geq cost(\vec{x}_{-x_{i,l}}, \vec{C})$. That is, since otherwise the optimal clustering for $\vec{x}$ would make $x_{i,l}$ a singleton cluster and serve the rest of the agents as in $\vec{C}'$. Let $\mathrm{diff}$ be the difference in the total cost agents in $\vec{x}_{-x_{i,l}}$ experience between clusterings $\vec{C}$ and $\vec{C}'$. I.e. $\mathrm{diff} = cost(\vec{x}_{-x_{i,l}}, \vec{C}') -  cost(\vec{x}_{-x_{i,l}}, \vec{C})$. As before, since $x_{i,l}$ is not a singleton cluster in $(\vec{x},\vec{C})$ we know that $d(x_{i,l},c_i)<\mathrm{diff}$ (or else setting $x_{i,l}$ as a singleton would have a lower cost in $\vec{x}$ than $\vec{C}$).

But we can perform a $3$-perturbation in $\vec{x}$ in the following way: Scale down all distances between agents from $x_1$ up to $x_{i-1,r}$ and all distances between agents from $x_{i,l+1}$ to $x_n$ ($x_n$ being the rightmost agent of the instance) by $3$. Call this instance $\vec{x}_{per}$. Since agents of clusters $C_{i-1}$ and  $C_{i}$ are not clustered together neither in $\vec{C}$ nor in $\vec{C}'$ we have that
\[\mathrm{diff}_{per} \leq \frac{cost(\vec{x}_{-x_{i,l}}, \vec{C}') -  cost(\vec{x}_{-x_{i,l}}, \vec{C})}{3}\,.\]
So $\mathrm{diff}_{per}\leq \mathrm{diff}/3$. Since $d(x_{i,l}, c_i)$ is unaffected in the perturbation and by stability the optimal clustering of $\vec{x}_{per}$ must remain the same (as $\vec{x}$) we have that it must be $d(x_{i,l},c_i)<\mathrm{diff}/3$ (1).

Finally, the least amount of extra social cost suffered between $cost(\vec{x}, \vec{C})$ and the case of setting $x_{i,l}$ as a center that serves only itself and serve the remaining agents of the instance as on $\vec{C}'$ (i.e. as they would be served should $x'$ gets a facility that served only herself), will be $\mathrm{diff} - d(x_{i,l},c_i)$. 
This means that the optimal clustering algorithm would only choose this solution when $d(x',c_i) > \mathrm{diff} - d(x,c_i)$. So the agent must deviate by at least $\mathrm{diff} - 2d(x,c_i)$. But from (1) we have
\[ \mathrm{diff}-2d(x,c_i) > 3d(x,c_i) -2d(x_{i,l},c_i) = d(x_{i,l},c_i)\,, \]
which concludes the proof of the lemma. 
\qed\end{proof}

\section{The Proof of Lemma~\ref{l:singleton_deviation_5}}
\label{app:s:singleton_deviation_5}

We first present the outline of the proof and then the proof follows. We do this because despite the mostly relatively straight forward arguments used in the proof, due to the delicate formalization required in order to formally describe all the mentioned conditions, the proof gains a good amount of descriptive length. We consider random agent $x_i\in C_i$ of instance $\vec{x}$ with optimal clustering $\vec{C} = (C_1,...,C_k)$, deviating to location $x'$ creating instance $\vec{x}'=(x_{-i},x')$. 

Initially we show that due to the large distance between clusters $C_i$ and $C_j$ with $i\neq j$, guaranteed by Lemma~\ref{l:separated-clusters} for $5$-stable instances, we need only study the cases where $x'\in (x_{i-1,r},x_{i,l})$ and $x'\in (x_{i,r},x_{i+1,l})$ and in the optimal clustering $\vec{C}'$ of instance $\vec{x}'$ no agent in $\vec{x}'\setminus C_i$ is served together with any agent in $C_i$\footnote{Note here that we refer to the group of agents that belong in cluster $C_i$ of the optimal clustering of instance $\vec{x}$. This group is well defined for instance $\vec{x}'$ as well.}, as in all other cases either $x'$ is not a singleton in $\vec{C}'$ or $d(x',x_i)>\Diam(C_i)$. 

The rest of the proof follows the logic of the proof of Theorem~\ref{th:optimal} (which follows), tailored to this specific case. More specifically, given the observation above, we notice the following: In alternative clustering $\vec{C}''$ in which we forcefully place two facilities serving only agents in $C_i$ (optimally with regards to serving agents in $C_i$),  and serve the remaining agents, $\vec{x}\setminus C_i$, optimally with $k-2$ facilities, the cost agents in $\vec{x}\setminus C_i$ experience in clustering $\vec{C}''$ is the same cost agents in $\vec{x}'\setminus C_i\bigcup x'$ experience in clustering $\vec{C}$ (notice that the sets $\vec{x}\setminus C_i$ and $\vec{x}'\setminus C_i\bigcup x'$). Now, the cost of agents in $C_i$ in clustering $\vec{C}''$ is at least $\Diam(C_i)/2$ smaller than it is in $\vec{C}'$ (that is since we can always place the facility to the edge agent further from $c_i$ - see proof of Theorem~\ref{th:optimal}). But since $\vec{C}''$ is not optimal for $\vec{x}$ this means that agents in $\vec{x}\setminus C_i$ experience an increase in cost larger than $\Diam(C_i)/2$ in clustering $\vec{C}''$ when compared to clustering $\vec{C}$. For brevity we symbolize this cost increase as $cst$, so we say $cst>\Diam(C_i)/2$. 

We  now we consider the $4$-perturbation of instance $\vec{x}$ in which all distances among agents to the left and to the right of $C_i$ are shrunk by a factor of $4$. By stability we know that the optimal clustering of the perturbed instance should be the same as the optimal clustering of the original! But in the perturbed instance all costs of agents in $\vec{x}\setminus C_i$ are divided by $4$ in both clusterings $\vec{C}$ and $\vec{C}''$ while the costs of agents in $C_i$ remain the same. So, in order for $\vec{C}''$ to be sub-optimal in the perturbed instance it must be $cst/4>\Diam(C_i)/2$ which means $cst>2\Diam(C_i)$. But serving agent $x'$ of $\vec{x}'$ by $c_i$ has cost at most $2\Diam(C_i)$ if $d(x',x_i)<\Diam(C_i)$ since $d(x_i,c_i)<\Diam(C_i)$. This means that clustering $\vec{C}'$ cannot be optimal for $\vec{x}$.

\begin{proof}
We want to show the lemma for any $\gamma$-stable instance for $\gamma\geq 5$. 

We prove the lemma for random agent $x_j\in C_i$ for some cluster $C_i$ in optimal clustering $
\vec{C}=(C_1,\ldots,C_k)$ of the $\gamma$-stable instance $\vec{x}$. Consider that the agent declares false location $x'$ providing input profile $\vec{x}'=(\vec{x}_{-j}, x')$ to the mechanism in order to become a singleton cluster. That is, if the optimal clustering of instance $\vec{x}'$ is $\vec{Y}$ then $x'$ is a single agent cluster in $\vec{Y}$. 


We first study the case where $|C_i|=2$. But then, from Lemma~\ref{3_stab_lemma} we know that for any $\gamma$-stable instance for $\gamma\geq 3$ agent $x_j\in C_i$ of optimal clustering $\vec{C}$ must deviate by at least his distance to $C_i$'s center in order to become a singleton cluster in $\vec{Y}$. I.e. it must be $d(x',x_j)>d(x_j,c_i)=\Diam(C_i)$, so the lemma stands for this case.

For the most general case, $|C_i|\geq 3$ we start with some observations.
By Lemma~\ref{l:separated-clusters} we know that for any two clusters $C_i$ and $C_j$ of optimal clustering $
\vec{C}=(C_1,\ldots,C_k)$ of $\vec{x}$ we have $d(C_i,C_j)>\Big(\frac{(\gamma-1)^2}{2\gamma} \Big)\max\{\Diam(C_i),\Diam(C_j)\}$. For $\gamma\geq 5$ that is:

\begin{equation}\label{eq:dist_in_5}
d(C_i,C_j)>1.6\max\{\Diam(C_i),\Diam(C_j)\}. 
\end{equation}

Too begin, we notice the following claim:
\begin{myclaim}
\label{cl:obvious}
Agent $x_i$ cannot declare a false location $x'$ with $x_{i,l}\leq x'\leq x_{i,r}$ in such a way that $x'$ is a singleton cluster in $\vec{Y}$.
\end{myclaim}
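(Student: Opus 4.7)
The plan is to derive a contradiction from the hypothesis that $\{x'\}$ is a singleton cluster in the optimal clustering $\vec{Y}$ of $\vec{x}'=(\vec{x}_{-j},x')$, under the assumption $x_{i,l}\le x'\le x_{i,r}$. I split on $|C_i|$.

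The base case $|C_i|=2$ is immediate from Lemma~\ref{3_stab_lemma}. By the left-median tie-breaking convention, $c_i=x_{i,l}$, and since $x_j\in C_i\setminus\{c_i\}$, necessarily $x_j=x_{i,r}$ and $d(x_j,c_i)=\Diam(C_i)$. Applying Lemma~\ref{3_stab_lemma} (valid since $\gamma\ge 5\ge 3$) gives $d(x_j,x')>d(x_j,c_i)=\Diam(C_i)$, which contradicts $x'\in[x_{i,l},x_{i,r}]$, as the latter forces $d(x_j,x')\le\Diam(C_i)$.

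For the main case $|C_i|\ge 3$, I will combine a cost comparison with two dual stability perturbations. First, define the witness clustering of $\vec{x}'$
\[ \vec{W}=(C_1,\ldots,C_{i-1},(C_i\setminus\{x_j\})\cup\{x'\},C_{i+1},\ldots,C_k), \]
still centered at $c_i$ in its $i$-th cluster. Since $x'\in[x_{i,l},x_{i,r}]$, we have $\SC(\vec{x}',\vec{W})\le\SC^\ast(\vec{x})-d(x_j,c_i)+d(x',c_i)$. Dually, the $k$-clustering $\vec{Z}^\ast$ of $\vec{x}$ obtained from $\vec{Y}$ by replacing the singleton $\{x'\}$ with $\{x_j\}$ satisfies $\SC(\vec{x},\vec{Z}^\ast)=\SC(\vec{x}',\vec{Y})$, because $\vec{x}'\setminus\{x'\}$ and $\vec{x}\setminus\{x_j\}$ are the same multiset of locations; by optimality of $\vec{C}$ on $\vec{x}$, this yields $\SC^\ast(\vec{x})\le\SC(\vec{x}',\vec{Y})$. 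Chaining with $\SC(\vec{x}',\vec{Y})\le\SC(\vec{x}',\vec{W})$ (optimality of $\vec{Y}$) produces the structural inequality $d(x',c_i)\ge d(x_j,c_i)$. Combined with Lemma~\ref{3_stab_lemma}'s $d(x_j,x')>d(x_j,c_i)$, a one-line case analysis of the ordering of $x_j,x',c_i$ on the line immediately rules out the configuration where $x'$ lies strictly between $x_j$ and $c_i$ (else one would get $d(x_j,c_i)>2d(x_j,c_i)$).

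To eliminate the two remaining orderings, the plan is to apply two complementary $5$-perturbations of $\vec{x}$: one that compresses every consecutive-pair distance internal to $C_i$ by the factor $5$, and a dual one that compresses every consecutive-pair distance outside $C_i$ (including the two border gaps) by the same factor. Both are valid $5$-perturbations, so by stability $\vec{C}$ remains optimal on each. Evaluating $\SC(\vec{Z}^\ast)$ on the two perturbed instances cluster-by-cluster then yields two complementary inequalities relating the cost $A$ of $C_i$ under $\vec{C}$ and the cost $P$ incurred by the clusters of $\vec{Y}$ that cover $C_i\setminus\{x_j\}$; chaining these with the $\vec{W}$-bound above forces $A-P\le\Diam(C_i)/4$. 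This violates the typical $\Omega(\Diam(C_i))$ gap between the optimal $1$-facility and $2$-facility clusterings of a $|C_i|$-point set of diameter $\Diam(C_i)$ with $|C_i|-1\ge 2$ remaining points, giving the desired contradiction. The principal obstacle is the sub-case where a cluster of $\vec{Y}$ adjacent to $\{x'\}$ is \emph{mixed}, i.e., contains agents from both $C_i\setminus\{x_j\}$ and an adjacent $C_{i\pm 1}$: the clean cluster-by-cluster bookkeeping on the perturbed instances then breaks down, but in this sub-case Corollary~\ref{c:separated-clusters} forces the diameter of the mixed cluster to exceed $\Diam(C_i)$, which already makes $\vec{W}$ strictly cheaper than $\vec{Y}$ on the un-perturbed instance $\vec{x}'$ and contradicts $\vec{Y}$'s optimality directly.
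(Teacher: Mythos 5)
Your reduction to the inequality $d(x',c_i)\ge d(x_j,c_i)$ (via the witness clustering $\vec{W}$ and the exchange $\vec{Z}^\ast$) is sound, and the $|C_i|=2$ base case via Lemma~\ref{3_stab_lemma} is fine. But the heart of the proof --- eliminating the remaining orderings when $|C_i|\ge 3$ --- is only announced, not carried out, and two of its load-bearing steps do not hold as stated. First, the ``two complementary $5$-perturbations'' are never actually evaluated: the inequality $A-P\le \Diam(C_i)/4$ is asserted without derivation, and the claimed contradiction with ``the typical $\Omega(\Diam(C_i))$ gap between the optimal $1$-facility and $2$-facility clusterings'' does not close the argument even if that inequality were granted --- knowing that the $2$-median cost of $C_i$ is at most $A-\Diam(C_i)/2$ only tells you $P$ exceeds the $2$-median optimum, which is not a contradiction. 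Second, in the mixed-cluster sub-case you assert that a cluster of $\vec{Y}$ of diameter exceeding $\Diam(C_i)$ ``already makes $\vec{W}$ strictly cheaper than $\vec{Y}$''; a large diameter does not by itself imply a large cost, so this is a non sequitur. The correct observation there is local: an agent $z\in C_i\setminus\{x_j\}$ clustered with a point of $C_{i\pm 1}$ pays at least $d(C_i,C_{i\pm1})>1.6\,\Diam(C_i)$, yet the facility sitting at the singleton $x'$ is within $\Diam(C_i)$ of $z$, contradicting the nearest-facility assignment of an optimal clustering.

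What your proposal is missing is the structural dichotomy the paper exploits, which makes the whole perturbation machinery unnecessary here. Since $x'$ is a singleton lying (strictly) between the leftmost and rightmost remaining agents of $C_i\setminus\{x_j\}$, those two agents cannot be served by the same facility in $\vec{Y}$. Hence either (i) some agent of $C_i\setminus\{x_j\}$ is served by a facility outside $C_i$ --- killed by the nearest-facility argument above --- or (ii) $\vec{Y}$ places two facilities among the agents of $C_i\setminus\{x_j\}$; but then, since $x'$ is a singleton, the optimal $(k-1)$-clustering of $\vec{x}_{-j}$ would also place two facilities inside $C_i\setminus\{x_j\}$, which forces the optimal $k$-clustering of $\vec{x}$ to place at least two facilities in $C_i$, contradicting the optimality of $\vec{C}$. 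The edge intervals $[x_{i,l},x'_{i,l}]$ and $[x'_{i,r},x_{i,r}]$ (when $x_j$ is itself an extreme agent of $C_i$) are handled by a separate one-line monotonicity remark. I would recommend reorganizing your argument around this dichotomy rather than the global cost accounting.
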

We can easily see the validity of the claim, since by optimality (also see proof of Theorem~\ref{th:optimal}) $x_j\in C_i$ cannot change the optimal clustering by deviating within the bounds of cluster $C_i$, i.e. if $x_{i,l}\leq x'\leq x_{i,r}$.Hence it must be $x'\neq [x_{i,l},x_{i,r}]$. Even so, for completeness, we provide a proof of the claim, tailored to the case of $5$-stable instances, after the proof of the lemma.

In addition, we notice that if $x'\leq x_{i-1,r}$ or $x'\geq x_{i+1,l}$ then the lemma trivially stands, again by Equation~\ref{eq:dist_in_5} (I.e. in this case it would be $d(x',x_j)>1.6\Diam(C_i)$).

This means that we need only study the cases where $x'\in (x_{i-1,r},x_{i,l})$ or $x'\in (x_{i,r},x_{i+1,l})$ and $d(x',x_j)<\Diam(C_i)$ (and show that $x'$ cannot become a singleton cluster in $\vec{Y}$ in these cases). 

Suppose, contrary to the lemma's claim, that agent declares location $x'\in (x_{i-1,r},x_{i,l})$ with $d(x',x_j)<\Diam(C_i)$ such that $x'$ is a singleton in $\vec{Y}$ (the other case, $x'\in (x_{i,r},x_{i+1,l})$, is symmetrical). Then we notice the following three properties for optimal clustering $\vec{Y}$ of instance $\vec{x}'$:
\begin{description}
    \item[Property 1:] In $\vec{Y}$ there is a facility among agents in ${C_i\setminus x_j}$.
    \item[Property 2:] In $\vec{Y}$ no agent``to the left" of cluster $C_i$ (i.e. by an agent in some cluster $C_l$ for $l<i$, of $\vec{C}$) is served by an agent in $C_i\setminus x_j$ .
    \item[Property 3:] In $\vec{Y}$ no agent  ``to the right" of cluster $C_i$ (i.e. by an agent in some cluster $C_l$ for $l>i$, of $\vec{C}$) is served by an agent in $C_i\setminus x_j$ .
\end{description}

The imminent conclusion from Properties 1, 2 and 3 is the following: Consider instance $\vec{x}\setminus C_i$ and it's optimal $k-2$-clustering $\vec{C}_{-2}$. Then $cost(\vec{x}\setminus C_i, \vec{Y}) = cost(\vec{x}\setminus C_i, \vec{C}_{-2})$\footnote{For a description of this notation, of the form $cost(\vec{x},\vec{C})$, see proof of Theorem~\ref{th:optimal}}. We provide short proofs for each one of these three properties right after the proof of the lemma.

We are now ready to complete the proof. In order to do so we bound the extra cost experienced by agents in $\vec{x}\setminus C_i$ in the possible re-clustering after $x_i$'s deviation, i.e. $cost(\vec{x}\setminus C_i,\vec{Y}) - cost(\vec{x}\setminus C_i,\vec{C})$. We do this by considering the following alternative clustering $C'$ of instance $\vec{x}$: serve agents in $C_i$ using two facilities, optimally and agents in $\vec{x}\setminus C_i$ using the remaining $k-2$ facilities optimally. So in $\vec{C}'$ we have:
\begin{equation}
\label{eq:gain_bound}
    cost(C_i,\vec{C}')\leq cost(C_i,\vec{C}) -\frac{\Diam(C_i)}{2},
\end{equation}
since placing the second facility placed among agents in $C_i$ to the edge-agent further away from $c_i$ reduces the cost by at least $\frac{\Diam(C_i)}{2}$.

But since $\vec{C}$ is optimal in $\vec{x}$ and hence $\vec{C}'$ is not, it must be:
\begin{equation}
\label{eq:loss_bound}
    cost(\vec{x}\setminus C_i,\vec{C'}) - cost(\vec{x}\setminus C_i,\vec{C}) > \frac{\Diam(C_i)}{2}
\end{equation}

Otherwise it would be $cost(\vec{x},\vec{C}')< cost(\vec{x},\vec{C})$. Now notice that properties 1, 2 and 3 mean that agents in $\vec{x}\setminus C_i$ are clustered in exactly the same way in $\vec{C'}$ as in $\vec{Y}$. That means that:
\begin{equation}
\label{eq:equality_of_clusterings}
    cost(\vec{x}'\setminus \{C_i\bigcup x'\}, \vec{Y}) = cost(\vec{x}\setminus C_i, \vec{C}')
\end{equation}
and that no agent to the left of $C_i$ is clustered together with any agent to the right of $C_i$ in $\vec{C}'$.

The last observation means if we consider a $4$-perturbation of instance $\vec{x}$, instance $\vec{x}_p$, in which we divide all distances among agents between $[x_l,x_{i-1,r}]$ and agents between $[x_{i+1,l},x_r]$, where $x_l$ and $x_r$ the leftmost and rightmost agents of the instance equivalently we have that:
\begin{equation*}
    cost(\vec{x}_p\setminus C_i, \vec{C'}) - cost(\vec{x}_p\setminus C_i, \vec{C}) = \frac{cost(\vec{x}\setminus C_i,\vec{C'}) - cost(\vec{x}\setminus C_i,\vec{C})}{4}
\end{equation*}

But in $x_p$ the distances among agents in $C_i$ remain unaffected which means that in $x_p$, Equation~\ref{eq:gain_bound} still stands. This means, that since the instance is $5$-stable, clustering $\vec{C'}$ must still be sub-optimal in $\vec{x}_p$ and hence it must be 

\begin{equation}
\label{eq:final_bit}
\begin{split}
    cost(\vec{x}_p\setminus C_i,\vec{C'}) - cost(\vec{x}_p\setminus C_i,\vec{C}) &> \frac{\Diam(C_i)}{2}\Rightarrow \\
    \frac{cost(\vec{x}\setminus C_i,\vec{C'}) - cost(\vec{x}\setminus C_i,\vec{C})}{4} &> \frac{\Diam(C_i)}{2}\Rightarrow \\
    cost(\vec{x}\setminus C_i,\vec{C'}) - cost(\vec{x}\setminus C_i,\vec{C})&> 2\Diam(C_i).
    \end{split}
\end{equation}

Noticing again that by Equation~(\ref{eq:equality_of_clusterings}), $    cost(\vec{x}'\setminus \{C_i\bigcup x'\}, \vec{Y}) = cost(\vec{x}\setminus C_i, \vec{C}')$ and by Equation~(\ref{eq:final_bit}) and $cost(\vec{x}\setminus C_i,\vec{C}) = cost(\vec{x}'\setminus \{C_i\bigcup x'\},\vec{C})$ we have

\begin{equation*}
    cost(\vec{x}'\setminus \{C_i\bigcup x'\},\vec{Y}) - cost(\vec{x}\setminus \{C_i\bigcup x'\},\vec{C})> 2\Diam(C_i).
\end{equation*}
Finally, since $d(x',x_i)<\Diam(C_i)$, 
\begin{equation*}
    cost(\{C_i\bigcup x' \setminus x_i\}, \vec{C})-cost(\{C_i\bigcup x' \setminus x_i\}, \vec{Y}) < 2\Diam(C_i),
\end{equation*}
since $d(x_i,c_i)\leq\Diam(C_i)$. By adding the last two equations we get that $cost(\vec{x}',\vec{Y})> cost(\vec{x}',\vec{C})$ which means that $\vec{Y}$ is not optimal.

\qed
\end{proof}

We now present the proofs of \textbf{Claim~\ref{cl:obvious}} and \textbf{Properties 1, 2 and 3}, used in the main proof of Lemma~\ref{l:singleton_deviation_5}. 

\begin{proof}[Of Claim~\ref{cl:obvious}]
 Consider $x_{i,l}'$ and $x_{i,r}'$ to be the leftmost and rightmost agents of $C_i\setminus x_j$ (i.e. if $x_j\neq x_{i,r},x_{i,l}$ then $x_{i,r}=x_{i,r}'$ and $x_{i,l}=x_{i,l}'$).

Contrary to the claim, suppose $x_{i,l}'\leq x'\leq x_{i,r}'$ and $x'$ is a singleton cluster  $\vec{Y}$. Since $x'$ is a singleton and $x_{i,l}'$ and  $x_{i,r}'$ are to her left and right side equivalently, $x_{i,l}'$ and  $x_{i,r}'$ cannot be served by the same facility in $\vec{Y}$ (since clustering $\vec{Y}$ is optimal for $\vec{x}'$). This means that either $x_{i,l}'$ or  $x_{i,r}'$ is served by an agent in $\vec{x}\setminus C_i$ or there are two facilities among agents in $C_i\setminus x_j$ in $\vec{Y}$. Both of these cases are infeasible though. For the first one, suppose that $x_{i,r}$ is not served by an agent in $C_i\setminus x_j$. By Equation~\ref{eq:dist_in_5} that means that the cost of serving $x_{i,r}$ is at-least $1.6\Diam(C_i)$. But since $x_{i,l}'\leq x'\leq x_{i,r}'$ $x'$, $d(x',x_{i,r})<\Diam(C_i)$ so $\vec{Y}$ could not be optimal in $\vec{x}'$. For the latter case ($\vec{Y}$ places two facilities among agents in $C_i\setminus x_j$) we see that if $\vec{Y}$ is optimal for $\vec{x}'$ then the optimal $(k-1)$-clustering of instance $(\vec{x}_{-j})$ would place two facilities among agents in $C_i\setminus x_j$ (since $x'$ is a singleton removing her and one facility from the instance should yield the exact same clustering for the rest of the agents). But then, since in $\vec{C}$ there is only one facility among agents in $C_i$, $\vec{C}$ could not be optimal for instance $\vec{x}$ (because if the optimal $(k-1)$-clustering of instance $(\vec{x}_{-j})$ places two facilities among agents in set $C_i\setminus x_j$ then the optimal $k$-clustering of instance $\vec{x}$ should place at least as many among agents in $C_i$) , which is a contradiction.

Finally we notice that if $x_j=x_{i,l}$, $x'$ cannot become a singleton in $\vec{Y}$ if $x'\in [x_{i,l}, x_{i,l}']$ since the cost serving agent $x_j$ by $c_i$ in that interval is only decreased (in relation to the cost of serving her by $c_i$ in $\vec{x}$ - she's getting closer to her serving facility). Similarly for the case of $x_j=x_{i,r}$ moving in interval $[x_{i,r}',x_{i,r}]$. The above mean that agent $x_j$ cannot become a singleton cluster by moving within the bounds of $C_i$ (i.e. if $x'$ is a singleton in $\vec{Y}$  it must be $x'\notin [x_{i,l},x_{i,r}]$), which is the claim.
\qed
\end{proof}

\begin{proof}[Of Property 1]
We know that $|C_i\setminus x_j|\geq 2$. Furthermore, since $d(x',x_{j})<\Diam(C_i)$ we have that $d(x',c_i)<2\Diam(C_i)$. But if there is no facility among agents in $C_i\setminus x_j$ that means that these agents are all served by a facility placed in a location $x_l$ with $x_l\in C_l$ with $l\neq i$. But, again by Equation~\ref{eq:dist_in_5} that would mean that 
\begin{equation}\label{eq:fac_in_clust_1}
cost(C_i\setminus x_j, \vec{Y})>2*1.6\Diam(C_i) + cost(C_i\setminus x_j, \vec{C})
\end{equation}
(since $|C_i\setminus x_j|\geq 2$, $d(C_i,C_l)\geq 1.6\Diam(C_i)$). Furthermore, since agents in $\vec{x}\setminus C_i$ are served by the same number of facilities in $\vec{Y}$ as in $\vec{C}$, but also have to serve agents in ${C_i\setminus x_j}$ in $\vec{Y}$ (i.e. the placement of the $(k-1)$ facilities among agents in $\vec{x}\setminus C_i$ is not optimal in $\vec{Y}$ as it is in $\vec{C}$, for these agents) we have
\begin{equation}\label{eq:fac_in_clust_2}
cost(\vec{x}\setminus C_i, \vec{Y}) \geq cost(\vec{x}\setminus C_i, \vec{C}).
\end{equation}

Hence, by adding (\ref{eq:fac_in_clust_1}) and (\ref{eq:fac_in_clust_2}) we have that :

\begin{equation*}
    cost(\vec{x}\setminus x_j,\vec{Y})> 2*1.6\Diam(C_i)+cost(\vec{x}\setminus x_j,\vec{C})
\end{equation*}

By remembering that $\vec{x}' = (\vec{x}_{-j},x')$ and in $\vec{Y}$ $x'$ is a singleton cluster (i.e. has cost 0) the above becomes:

\begin{equation}\label{eq:singleton_clustering_Y}
    cost(\vec{x}',\vec{Y})> 2*1.6\Diam(C_i)+cost(\vec{x}\setminus x_j,\vec{C})
\end{equation}

But, alternative clustering $\vec{C}'$ for $\vec{x}'$ in which we serve all agents as we do in $\vec{C}$ and also serve location $x'$ by $c_i$ has cost 
\begin{equation}\label{eq:singleton_clustering_C}
cost(\vec{x}', \vec{C}') \leq cost(\vec{x}\setminus x_j,\vec{C}) + 2\Diam(C_i),
\end{equation}
 since $d(x',c_i)<2\Diam(C_i)$.

This means that, by (\ref{eq:singleton_clustering_Y}) and (\ref{eq:singleton_clustering_C}) $cost(\vec{x}',\vec{Y})>cost(\vec{x}',\vec{C'})$ which means that clustering $\vec{Y}$ would be sup-optimal for instance $\vec{x}'$, which is a contradiction.

Notice that by Observation~\ref{obs:cluster_distances}, property 1 means that no agents in $C_i\setminus x_j$ are served by an agent not in $C_i$ in  $\vec{Y}$.
\qed
\end{proof}

\begin{proof}[Of Property 2]
Property 2 is trivial: since $x'\in (x_{i-1,r},x_{i,l})$ and $x'$ forms a singleton cluster in $\vec{Y}$, by optimality no agent to the left of $x'$ is clustered together with agents to the right of $x'$.
\qed
\end{proof}

\begin{proof}[Of Property 3]
Initially, for property 3 we notice the following: At most 1 agent in $C_{i+1}$ can be clustered together with agents in $C_i$ in $\vec{Y}$. Otherwise, due to the distance between $C_i$ and $C_{i+1}$, clustering $\vec{Y}$ would be sub-optimal (using the same reasoning as for property 1). Obviously, due to optimality, this agent can only be $x_{i+1,l}$. 

We now consider the structure of cluster $C_i$ in relation to agent $x_{i+1,l}$. Specifically, by Equation~(\ref{eq:dist_in_5}) it must be 
\begin{equation}
\label{eq:min_dist_of_right_agent}
d(x_{i,r},x_{i+1,l})>1.6\cdot \Diam(C_i),
\end{equation}
 since $d(x_{i,r},x_{i+1,l})=d(C_i,C_{i+1})$. By looking at the proof of Lemma~\ref{l:separated-clusters} we see that the smallest possible distance between $C_i$ and $C_{i+1}$ is achieved when $d(x_{i,l},c_i)=\frac{\Diam(C_i)}{c}$ for $c=\frac{2\gamma^2}{\gamma^2+\gamma} \Rightarrow \frac{1}{c}=0.6$ for $\gamma=5$. This means that since agent $x_j$ deviates to the left in this case, by at most $\Diam(C_i)$, it must be 
\begin{equation}
\label{eq:furthest_deviation_possible}
d(x',c_i)\leq 1.6\Diam(C_i),
\end{equation}
 in the edge case. 
Furthermore, by Observation~\ref{obs:cluster_distances}, since $x_{i+1,l}$ is not served by an agent in $C_{i+1}$ there is no facility among agents in $C_{i+1}$ in $\vec{Y}$. I.e. all agents in $C_{i+1}\setminus x_{i+1,l}$ are served by a facility placed on $[x_{i+2,l},x_n]$ where $x_n$ the rightmost agent location in the instance. But, by Lemma~\ref{l:separated-clusters}, if $x_{i+1,l}$ is served by $c_{i+1}\in C_{i+1}$ in $\vec{C}$, $d(C_{i+1},C_{i+2})>1.6\Diam(C_{i+1})\geq 1.6 d(x_{i+1,l},c_{i+1})$ and so, it is 
\begin{equation}
\label{eq:min_distance_of_clusters_to_the_right}
cost(x_{i+1,o}, \vec{Y})\geq d(x_{i+1,o},x_{i+2,l})\geq 1.6 d(x_{i+1,l},c_{i+1}),
\end{equation}
 for every $x_{i+1,o}\in C_{i+1}\setminus x_{i+1,l}$.

Now we are able to show that clustering $\vec{Y}$ cannot be optimal for instance $\vec{x}'$ in the edge case. We will compare it with clustering $\vec{C}$ (where every agent is served by the same facility as in clustering $\vec{C}$ and $x'$ is served by $c_i$). We have the following:

\begin{equation*}
    cost(\vec{x}'\setminus\{x'\bigcup C_{i}\bigcup C_{i+1}\}, \vec{Y})\geq cost(\vec{x}'\setminus\{x'\bigcup C_{i}\bigcup C_{i+1}\}, \vec{C}),
\end{equation*}
 by optimality. Furthermore,
\begin{equation*}
\begin{split}
    cost(C_{i+1}, \vec{Y}) &\geq cost(C_{i+1}, \vec{C}) - d(x_{i+1,l},c_i) + 1.6 d(x_{i+1,l},c_{i+1})+1.6\Diam(C_i),
\end{split}
\end{equation*}
 by optimality and equations (\ref{eq:min_dist_of_right_agent}) and (\ref{eq:min_distance_of_clusters_to_the_right}). Also, 
\begin{equation*}
    cost(C_{i}\setminus x, \vec{Y})\geq cost(C_{i}\setminus x, \vec{C}),
\end{equation*}
 by optimality. Finally,
\begin{equation*}
cost(x',\vec{Y})+1.6\Diam(C_i) > cost(x',\vec{C}),
\end{equation*}
by equation (\ref{eq:furthest_deviation_possible}).

By adding we get $cost(\vec{x}',\vec{Y})>cost(\vec{x}',\vec{C})$ which means that $\vec{Y}$ is sub-optimal for instance $\vec{x}'$. All we need to finalize this observation is realize that as we move away from the edge case, the above inequalities become easier to satisfy. Specifically if $C_i$ had center $c_i'<c_i$ we see that factor 1.6 of inequality (\ref{eq:furthest_deviation_possible}) decreases while $d(C_i,C_{i+1})$ increases. If $c_i'>c_i$ the same factor of inequality (\ref{eq:furthest_deviation_possible}) may increase by $|c_i'-c_i|$, but then $d(C_i,C_{i+1})$ increases by at least $\frac{\gamma^2+1}{\gamma+1}\cdot |c_i'-c_i| > 4.3|c_i'-c_i|$ (since $d(x_{i,l},x_{i+1,l})>\frac{\gamma^2+1}{\gamma+1}d(x_{i,l},c_i)$ - see proof of Lemma~\ref{l:separated-clusters}), hence maintaining $cost(\vec{x}',\vec{Y})>cost(\vec{x}',\vec{C})$.
\qed
\end{proof}

\section{Proofs of Auxiliary Lemmas Used in the Proof of Theorem~\ref{thm:lower_bound}}
\label{s:app:well-separated}

For completeness, we restate the proofs of the auxiliary lemmas with the properties of well-separated instances adapted from \cite{FT12} and used in the proof of Theorem~\ref{thm:lower_bound}. 

Before we proceed with the proofs of the auxiliary lemmas, we need the following basic fact about the facility allocation of any determistic strategyproof mechanism. 

\begin{lemma}[Proposition 2.1, \cite{FT12}]
\label{l:basicAllocations}
Let $M$ be a deterministic strategyproof with a bounded approximation ratio of $\rho \geq 1$ for $\sqrt{2}$-stable instances of $k$-Facility location on the line. For any $(k+1)$-location instance $\vec{x}$ with $x_1\leq x_2\leq \ldots \leq x_{k+1}$, $M_1(\vec{x})\leq x_{2}$ and $M_k(\vec{x})\geq x_k$.
\end{lemma}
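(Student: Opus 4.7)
The plan is to argue by contradiction, adapting the argument of \cite[Proposition~2.1]{FT12} to the $\sqrt{2}$-stable setting. By symmetry between left and right, it suffices to establish $M_k(\vec{x}) \geq x_k$; the claim $M_1(\vec{x}) \leq x_2$ is entirely analogous (apply the same argument to the reflected mechanism $M'(\vec{z}) = -M(-\vec{z})$). If $\vec{x}$ is not $\sqrt{2}$-stable, the convention used in the proof of Theorem~\ref{thm:lower_bound} (that $M$ places all facilities at $+\infty$ on non-stable inputs) makes the claim trivially hold. And if the optimal social cost of $\vec{x}$ is $0$, then some pair of consecutive agents coincide, and the $\rho$-approximation guarantee forces $M$ to serve every agent at cost $0$; in particular $x_k$ must coincide with a facility location, so $M_k(\vec{x}) \geq x_k$. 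Hence I may assume $\vec{x}$ is $\sqrt{2}$-stable with positive optimal cost and, for contradiction, that $l := M_k(\vec{x}) < x_k$.

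Since $l$ is the rightmost facility of $M(\vec{x})$ and $x_{k+1} > l$, the agent at $x_{k+1}$ is served by the facility at $l$. By strategyproofness on the stable instance $\vec{x}$, this means $l$ is the point of the image set $I := I_{k+1}(\vec{x}_{-(k+1)})$ nearest to $x_{k+1}$, and consequently $I$ contains no point in the open interval $(l,\, 2x_{k+1}-l)$.

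The heart of the argument is a targeted deviation by the agent at $x_{k+1}$: I set $y := x_k + \epsilon$ for a sufficiently small $\epsilon > 0$ and consider the declared instance $\vec{y} := (x_1,\ldots,x_k,y)$. I choose $\epsilon$ to satisfy two conditions. First, $\epsilon < \min_{i<k}(x_{i+1}-x_i)/\sqrt{2}$, which makes $\epsilon$ the unique minimum among the consecutive distances of $\vec{y}$ and, crucially, guarantees $\sqrt{2}$-stability of $\vec{y}$ with optimal pair $(x_k,y)$ --- this is what lets me invoke strategyproofness on $\vec{y}$. Second, $\epsilon < x_{k+1}-x_k$, which implies $y-l < (2x_{k+1}-l) - y$ and therefore keeps $l$ as the point of $I$ nearest to $y$ (any candidate point of $I$ to the right of $y$ lies at or beyond $2x_{k+1}-l$). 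Both conditions can be met simultaneously by a positive $\epsilon$, since all $x_{i+1}-x_i>0$ for $i<k$ and $x_{k+1}>x_k$ in the case under consideration.

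With these choices, strategyproofness forces $M(\vec{y})$ to serve the agent at $y$ by a facility at $l$, incurring connection cost $y-l = (x_k-l)+\epsilon$. The optimal social cost of $\vec{y}$ is exactly $\epsilon$ (pair $x_k$ with $y$, singletons elsewhere), so the approximation ratio of $M$ on $\vec{y}$ is at least $1 + (x_k-l)/\epsilon$, which surpasses any finite bound $\rho$ once $\epsilon < (x_k-l)/(\rho-1)$ --- a contradiction. The main obstacle, relative to the classical proof in \cite{FT12}, is precisely the requirement that the deviating instance $\vec{y}$ must itself be $\sqrt{2}$-stable for strategyproofness to apply there; the first condition on $\epsilon$ is engineered exactly for this purpose, and once it is in force the rest of the argument proceeds as in the non-stable setting.
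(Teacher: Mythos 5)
Your proof is correct in substance, but it takes a genuinely different and considerably heavier route than the paper's. The paper's own argument is a two-line deviation: assuming $M_1(\vec{x}) > x_2$, the agent at $x_1$ misreports $x_2$, producing the instance $(\vec{x}_{-1},x_2)$ in which $k+1$ agents occupy only $k$ distinct locations; this instance has optimal social cost $0$ and is trivially $\gamma$-stable for every $\gamma$, so the bounded approximation ratio forces a facility at $x_2$, and the agent's cost drops from $M_1(\vec{x})-x_1$ to $x_2-x_1$, contradicting strategyproofness. You instead prove the symmetric claim $M_k(\vec{x}) \geq x_k$ via the image-set machinery: you locate a hole of $I_{k+1}(\vec{x}_{-(k+1)})$ around $x_{k+1}$ and then move the agent at $x_{k+1}$ to $x_k+\eps$, creating a well-separated (hence stable) instance on which the forced service distance $x_k - l$ blows up the ratio as $\eps \to 0$. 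This is essentially the technique of Lemmas~\ref{l:nice-allocation}--\ref{l:movingLeft} and of Theorem~\ref{thm:lower_bound} imported one level down. Both arguments are valid; the paper's buys simplicity and sidesteps the image set entirely (its declared instance is degenerate and arbitrarily stable by inspection), while yours buys nothing extra here but does make explicit, correctly, the two stability checks (of the true instance and of the deviating instance) that the paper leaves implicit.

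One caveat worth flagging: your blanket claim that $I_{k+1}(\vec{x}_{-(k+1)})$ contains no point of $(l,\,2x_{k+1}-l)$ is only justified for image points realized through \emph{stable} declared instances, since strategyproofness is assumed only there (the paper's $+\infty$ convention for non-stable reports papers over this). This does not break your proof, because the only image point you actually need to control is the facility serving the agent at $y$ in $M(\vec{y})$, and $\vec{y}$ is stable by your choice of $\eps$, so strategyproofness at $\vec{x}$ against the deviation to $y$ applies directly. It would be cleaner to state the hole property only for that one stable report rather than for the whole image set.
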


\begin{proof}
We show it for $M_1(\vec{x})\leq x_{2}$, the other case is symmetric. Suppose $x_2<M_1(\vec{x})$. Then the agent in $x_1$ has the incentive to deviate to location $x_2$, since $M_1(\vec{x}_{-1}, x_2) = x_2$ due to the bounded approximation of $M$ (i.e., in $(\vec{x}_{-1}, x_2)$, $M$ allocates $k$ facilities to $k$ different locations). Notice that $(\vec{x}_{-1}, x_2)$ is $\gamma$-stable for any $\gamma \geq 1$.
\end{proof}

\subsection{The Proof of Lemma~\ref{l:nice-allocation}}

\begin{proof}
Since $M$ has a bounded approximation, the isolated pair $x_k$ and $x_{k+1}$ must be served by the same facility $M_k(\vec{x})$. By Lemma~\ref{l:basicAllocations}, we know that $M_k(\vec{x})\geq x_k$. Then, it must also be  $M_k(\vec{x})\leq x_{k+1}$\,. Otherwise, like in Lemma~\ref{l:basicAllocations}, agent $x_k$ could declare location $x_{k+1}$ and decrease her cost, since $M_k(\vec{x}_{-k}, x_{k+1}) = x_{k+1}$ by the bounded approximation of $M$. Again, the instance $(\vec{x}_{-k}, x_{k+1})$ is arbitrarily stable.
\end{proof}

\subsection{The Proof of Lemma~\ref{l:movingRight}}

We can now proceed to the proofs of the auxiliary lemmas, Lemma~\ref{l:movingLeft} and Lemma~\ref{l:movingRight}, which refer to the movement of isolated pairs. We only present the proof of Lemma~\ref{l:movingRight} here. The proof of Lemma~\ref{l:movingLeft} is fully symmetric. 

The proof shown here, refers to $2$-Facility Location on well separated instances with $3$ agents. All arguments as well as the stability factor of the instance only depend on the well separated property of the rightmost pairs of agents as well as their distance from the third agent from the right. That is, that since in all instances studied in the proof we only change distance between the agents of the isolated, rightmost pair, in the range $(0,d(x_1,x_2)/r)$ and only increase the distance between the isolated pair and the leftmost agent $x_1$, any instance with a large enough distance between $x_1$ and $x_2$, i.e. for which $d(x_1,x_2)>\gamma\cdot\rho d(x_2,x_3)$ will be $\gamma$-stable in all parts of the proof.  In that way it is easy to verify that the arguments presented here extend to $(x_1|\ldots|x_{k-1}|x_k,x_{k+1})$-well separated and stable instances of at least a specific minimum distance $d(x_{k-1},x_k)$. 

Consider $M$ to be a deterministic, strategyproof, anonymous and bounded approximation mechanism, with approximation ration of at most $\rho$, for 2-facility location. We will work on instance $\vec{x}$ with three agents $x_1<x_2<x_3$ which is $(x_1|x_2,x_3)$-well separated.

The proof of Lemma~\ref{l:movingRight} directly follows from the following  propositions, originally established in \cite[Appendix~A]{FT12}.

\begin{proposition}
\label{p:moveIsolatedLeft}
Consider $(x_1|x_2,x_3)$-well separated, stable instance $\vec{x}$ for which $M_2(\vec{x}) = x_2$. Then for instance $\vec{x}'=(\vec{x}_-2,x_2')$ where $x_2\leq x_2' \leq x_3$ it will be $M_2(\vec{x'})= x_2'$ 
\end{proposition}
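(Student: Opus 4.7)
The plan is to show that $x_2' \in I_2(\vec{x}_{-2}) = I_2(x_1, x_3)$, since this, combined with the nearest-point characterization of strategyproof mechanisms on stable instances (and Lemma~\ref{l:basicAllocations} applied to the well-separated instance $\vec{x}'$, which already gives $M_2(\vec{x}') \geq x_2'$), immediately yields $M_2(\vec{x}') = x_2'$. The hypothesis $M_2(\vec{x}) = x_2$ already places $x_2 \in I_2(x_1, x_3)$, so I have a natural anchor point in the image set from which to argue.

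Arguing by contradiction, I would assume $x_2' \notin I_2(x_1, x_3)$, so that $x_2'$ sits in some hole $(l, r)$ of the image set. Since $x_2 \in I_2(x_1, x_3)$ and $x_2 \leq x_2'$, the left endpoint satisfies $l \geq x_2$. The next step is to pick a fictitious declaration $y = l + \eps$ by agent~$2$, for $\eps > 0$ small enough that $y < (l+r)/2$, and to verify that the resulting instance $(x_1, y, x_3)$ remains $(x_1\,|\,y, x_3)$-well-separated (hence $\sqrt{2}$-stable): since $y \geq x_2$, we have $y - x_1 \geq x_2 - x_1 > \rho\,(x_3 - x_2) \geq \rho\,(x_3 - y)$, so the defining inequality of well-separation transfers from $\vec{x}$ to $(x_1, y, x_3)$.

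The crux of the argument is then a direct clash between two facts about $M_2(x_1, y, x_3)$. On one hand, Lemma~\ref{l:basicAllocations} applied to the stable instance $(x_1, y, x_3)$ forces $M_2(x_1, y, x_3) \geq y > l$. On the other hand, by strategyproofness on stable instances, $M_2(x_1, y, x_3)$ must coincide with the point of $I_2(x_1, x_3)$ nearest to $y$; because $y$ lies in the hole $(l, r)$ and the choice $y - l < r - y$ was engineered precisely so that the nearest image point is approached from the left at distance at most $y - l$, one gets $M_2(x_1, y, x_3) \leq l$. The two conclusions are incompatible, ruling out the assumed hole and concluding the proof.

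I expect the one delicate step to be the rigorous handling of ``nearest image point from the left of $y$'' when the endpoint $l$ may not a priori lie in $I_2(x_1, x_3)$. This is handled in the standard way: define $l = \sup\{a \in I_2(x_1, x_3) : a \leq x_2'\}$, invoke the closedness of image sets of strategyproof mechanisms recalled in the ``Image Sets and Holes'' discussion to conclude $l \in I_2$, and note that for $\eps > 0$ chosen as above, the nearest point of $I_2$ to $y$ from the left is exactly $l$, which is the only place where the ``clash'' formally bites.
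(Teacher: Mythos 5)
Your overall strategy --- reduce to showing $x_2' \in I_2(\vec{x}_{-2})$, assume a hole $(l,r)$ with $l \geq x_2$, and probe it with a report $y = l+\eps$ in its left half --- is exactly the route the paper takes. But the step you yourself call the crux does not work as stated. The nearest-point characterization of strategyproof mechanisms says that the facility \emph{serving} agent $2$ (the facility closest to her report $y$) sits at the point of $I_2(\vec{x}_{-2})$ nearest to $y$, namely at $l$; it says nothing about which coordinate of the output vector that facility is. Since Lemma~\ref{l:basicAllocations} gives $M_2(x_1,y,x_3) \geq y > l$, the facility at $l$ is forced to be $M_1$, not $M_2$. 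So the two facts you juxtapose --- ``$M_2 \geq y$'' and ``the image point nearest to $y$ is $l$'' --- are perfectly compatible (the output $(l, x_3)$ satisfies both and is consistent with strategyproofness), and there is no direct clash. Your conclusion ``$M_2(x_1,y,x_3) \leq l$'' simply does not follow from strategyproofness.

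The contradiction must instead be extracted from the approximation ratio, which is what the paper does: once $M_1(x_1,y,x_3) = l \geq x_2$, agent $x_1$ has no facility within distance $d(x_1,x_2)$, so the social cost is at least $d(x_1,x_2) > \rho\, d(x_2,x_3) \geq \rho\, d(y,x_3) \geq \rho\,\SC^\ast(x_1,y,x_3)$, contradicting the $\rho$-approximation of $M$ (equivalently: the isolated pair of the well-separated instance $(x_1,y,x_3)$ cannot be served by one facility while $x_1$ is served by the other). The same index ambiguity also touches your opening reduction: $x_2' \in I_2(\vec{x}_{-2})$ together with $M_2(\vec{x}') \geq x_2'$ only yields \emph{a} facility at $x_2'$; to conclude that this facility is $M_2$ rather than $M_1$ you again need the bounded approximation ratio (e.g., via Lemma~\ref{l:nice-allocation}) to force $M_1(\vec{x}')$ near $x_1$. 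Everything else --- the well-separation and stability check for $(x_1,y,x_3)$, and the closedness argument placing $l$ in the image set --- is fine.
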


\begin{proof}
Notice that since $d(x_2',x_3)<d(x_2,x_3)$ instance $\vec{x'}$ is still $(x_1|x_2,x_3)$-well separated. Furthermore, since $x_1$ is allocated a facility (by the $\rho$-approximation property of the instance), $\vec{x}'$ is at least as stable as $\vec{x}$ since the distance between the isolated pair is shortened and their distance from $x_1$ has grown. All that needs to be shown is that image set $I_2(\vec{x}_{-2})$ includes the interval $[x_2,x_3]$. Since $x_2$ is allocated a facility, we know $x_2\in I_2(\vec{x}_{-2})$. Furthermore, by the bounded approximation property of $M$ $x_3\in I_2(\vec{x}_{-2})$.
Assume there is a hole $(l,r)\in I_2(\vec{x}_{-2})$ with $x_2\leq l<r\leq x_k$. Consider location $y\in (l,r)$ such that $d(y,l)<d(y,r)$. By strategyproofness $l\in M(\vec{x}_{-2},y)$. But then, by Lemma~\ref{l:basicAllocations} we have that $F_2(\vec{x}_{-j},y)>y$ which contradicts $M$'s bounded approximation ratio, since the two agents of the isolated pair of $(\vec{x}_{-j},y)$ are served by different facilities.
\end{proof}

\begin{proposition}
\label{p:moveIsolatedRightmost}
Consider $(x_1|x_2,x_3)$-well separated stable instance $\vec{x}$ for which $M_2(\vec{x}) = x_2$. Then for every $(x_1|x_2,x_3')$-well separated instance $\vec{x'} = \vec(x_{-3},x_3')$, if $\vec{x'}$ is also well separated, $M_2(\vec{x'})=x_2$.
\end{proposition}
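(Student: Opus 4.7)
The plan is to combine strategyproofness of agent $3$ with Lemmas~\ref{l:basicAllocations} and~\ref{l:nice-allocation} to pin down $M_2(\vec{x}')$, handling first a ``short range'' case directly and then bootstrapping to all well-separated $x_3'$.

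\textbf{Base step.} By Lemma~\ref{l:nice-allocation} applied to the well-separated (hence stable) instance $\vec{x}'=(x_1,x_2,x_3')$, one has $M_2(\vec{x}') \in [x_2, x_3']$; Lemma~\ref{l:basicAllocations} gives $M_1(\vec{x}') \leq x_2$. Since $M_2(\vec{x})=x_2$ and $\vec{x}$ is well separated with isolated pair $(x_2,x_3)$, agent $3$ in $\vec{x}$ is served by the facility at $x_2$ and her truthful cost equals $x_3-x_2$. Strategyproofness applied to her deviation from $x_3$ to $x_3'$ yields
\[
x_3-x_2 \;\leq\; \min_{j \in \{1,2\}} |x_3 - M_j(\vec{x}')|.
\]
The constraint coming from $j=1$ is trivially satisfied since $M_1(\vec{x}') \leq x_2$, so the useful consequence is $|x_3 - M_2(\vec{x}')| \geq x_3-x_2$. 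Splitting on cases: (i) if $M_2(\vec{x}')\leq x_3$ the inequality rewrites as $M_2(\vec{x}')\leq x_2$, which together with $M_2(\vec{x}')\geq x_2$ forces $M_2(\vec{x}')=x_2$; (ii) if $M_2(\vec{x}')>x_3$ the inequality forces $M_2(\vec{x}')\geq 2x_3-x_2$, and combined with $M_2(\vec{x}')\leq x_3'$ this can occur only when $x_3'\geq 2x_3-x_2$. Thus for every well-separated $x_3' < 2x_3 - x_2$ (in particular every $x_3'\leq x_3$), we conclude $M_2(\vec{x}')=x_2$.

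\textbf{Bootstrapping.} For $x_3'\geq 2x_3-x_2$, I would pick a well-separated intermediate $y \in (x_3,\,2x_3-x_2)$. The base step (applied with $x_3'$ set to $y$) yields $M_2(x_1,x_2,y)=x_2$. I then rerun exactly the same reasoning with $(x_1,x_2,y)$ playing the role of $\vec{x}$, which extends the conclusion to all well-separated $x_3' < 2y-x_2$. Iterating this doubling procedure, the reachable radius grows like $2^n(x_3-x_2)$, so in finitely many steps the entire well-separated interval $(x_2,\,x_2+d(x_1,x_2)/\rho)$ is covered.

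\textbf{Main obstacle.} The only delicate point is ensuring that every instance used along the iteration is stable enough for Lemmas~\ref{l:basicAllocations} and~\ref{l:nice-allocation} (and for the strategyproofness-based image set reasoning) to apply. This is in fact automatic: each intermediate $(x_1,x_2,y)$ is well separated by construction, hence $\rho$-stable, so for any $\rho\geq\sqrt{2}$ the required stability holds throughout. A minor additional care is needed in case~(ii) to justify that the useful SP constraint is indeed $|x_3-M_2(\vec{x}')|\geq x_3-x_2$; this is immediate because $M_1(\vec{x}')\leq x_2 < x_3$ makes the $j=1$ term $\geq x_3-x_2$ automatically, so only the $j=2$ term can bind.
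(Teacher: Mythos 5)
Your proof is correct, but it follows a genuinely different route from the paper's. The paper works with the image set $I_3(\vec{x}_{-3})$, which does not depend on agent $3$'s report: since $M_2(\vec{x})=x_2$, there is an $x_3$-hole $(l,r)$ with $l=x_2$ and $r>2x_3-x_2$, and the paper shows in one shot that no well-separated report can lie in $[(l+r)/2,r)$ (otherwise the nearest-image-point property would place the facility at $r>x_3'$, contradicting Lemma~\ref{l:nice-allocation}), so every well-separated $x_3'$ is closer to $l$ than to $r$ and the facility stays at $x_2$. You instead use only the one-directional strategyproofness inequality anchored at the truthful instance, which by itself covers only $x_3'<2x_3-x_2$, and then compensate with a re-basing/doubling induction; this is essentially the inductive device the paper reserves for the subsequent proposition about translating the whole isolated pair. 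What your version buys is that it avoids the two-sided image-set characterization entirely (you only need Lemmas~\ref{l:basicAllocations} and~\ref{l:nice-allocation} plus raw strategyproofness); what it costs is the extra bookkeeping of verifying, at every stage of the iteration, that the intermediate instance is well-separated and stable enough for the strategyproofness constraint to bind. On that last point your remark that each intermediate is ``$\rho$-stable, so for any $\rho\geq\sqrt{2}$ the required stability holds'' slightly conflates the approximation ratio with the stability parameter; the clean fix, which the paper also adopts, is to require $d(x_1,x_2)$ large enough (e.g.\ $d(x_1,x_2)>\gamma\rho\, d(x_2,x_3')$ for all reports considered) so that every instance in the chain is $\gamma$-stable. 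With that standard caveat made explicit, your argument is complete.
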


We notice that in that case, the distance between the agents of the isolated pair might grow a from $\vec{x}$ to $\vec{x'}$. Since the proof of this proposition uses instances where the distance of the isolated pair varies from $\epsilon$ to $d(x_1,x_2)/\rho$ the proposition stands for stable instances only if all possible $(x_1|x_2,x_3')$-well separated instances $\vec{x'} = \vec(x_{-3},x_3')$ are well separated. It is easy to see, that since in all these instances it must be $d(x_2,x_3)<d(x_1,x_2)/\rho$ then for a large enough distance $d(x_1,x_2)$ (i.e. $d(x_1,x_2)>\gamma\cdot\rho d(x_2,x_3)$) $\vec{x'}$ is always stable. We show the following proof considering that we have made this assumption.
\begin{proof}
Since $M_2(\vec{x})<x_3$, we know that $x_3\notin I_3(\vec{x_{-3}})$. So, there is a $x_3$-hole $(l,r)\in I_3(\vec{x_{-3}})$. Since $M_2(\vec{x})=x_2$, $l=x_2$ and $r>2x_3-x_2$ (by strategyproofness). By strategyproofnes, if $x_3'<(r+l)/2$ (for $x_2<x_3'$ for well separated instance $\vec{x'}$), $M_2(\vec{x})=x_2$.

To finish, we show that there are no $(x_1|x_2,x_3')$-well separated instances $\vec{x'} = (\vec{x_{-3}},x_3')$ with $x_3'\geq (r+l)/2$ and $M_2(\vec{x'})\neq x_2$. Again, we reach a contradiction by assuming that there is a point $y\geq (r+l)/2$ for which $(\vec{x_{-3}},y)$ is a $(x_1|x_2,y)$-well separated instance with $M_2((\vec{x_{-3}},y))\neq x_2$. If such a $y$ exists, then there exists $x_k'\in [(r+l)/2,r)$ for which $\vec{x'} = (\vec{x_{-3}}, x_3')$ is a $(x_1|x_2,x_3')$-well separated. But then, $M_2(\vec{x'}) = r >x_3'$ (by strategyproofness, because $x_3'$ is closer to $r$ than to $l$). Since $\vec{x'}$ is $(x_1|x_2,x_3')$-well separated this contradicts lemma \ref{l:nice-allocation} which dictates that it must be $M_2(\vec{x_{-3}},x_3')\in [x_2,x_3']$.
\end{proof}

\begin{proposition}
\label{p:moveIsolatedBothRight}

Consider $(x_1|x_2,x_3)$-well separated stable instance $\vec{x}$ for which $M_2(\vec{x}) = x_2$. Then for every $(x_1|x_2',x_3')$-well separated instance $\vec{x'} = (\vec{x}_{-\{2,3\}},x_2',x_3')$, with $x_2<x_2'<(x_2+x_3)/2$, if $\vec{x'}$ is also well separated, $M_2(\vec{x'})=x_2$. 
\end{proposition}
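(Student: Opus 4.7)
The plan is to prove this by moving the endpoints of the isolated pair one at a time, chaining Propositions~\ref{p:moveIsolatedLeft} and~\ref{p:moveIsolatedRightmost}. I would introduce the intermediate profile $\vec{y} = (x_1, x_2', x_3)$, obtained from $\vec{x}$ by moving only the middle agent from $x_2$ to $x_2'$, and then move the rightmost agent from $x_3$ to $x_3'$ in a second step.

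First I would verify that $\vec{y}$ is $(x_1 | x_2', x_3)$-well separated and at least as stable as $\vec{x}$. Well-separation follows since $x_2 < x_2' < x_3$ implies
$d(x_1, x_2') > d(x_1, x_2) > \rho\, d(x_2, x_3) > \rho\, d(x_2', x_3)$,
and stability is preserved because enlarging $d(x_1, x_2)$ and shrinking $d(x_2, x_3)$ both strengthen the isolation of the pair: any $\gamma$-perturbation of $\vec{y}$ can be matched by a compatible $\gamma$-perturbation of $\vec{x}$, whose optimal clustering is unique by assumption. Proposition~\ref{p:moveIsolatedLeft} applied to $\vec{x}$ then yields $M_2(\vec{y}) = x_2'$.

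Second, I would pass from $\vec{y}$ to $\vec{x}' = (\vec{y}_{-3}, x_3')$. By hypothesis $\vec{x}'$ is $(x_1 | x_2', x_3')$-well separated and stable, and $\vec{y}$ is $(x_1 | x_2', x_3)$-well separated with $M_2(\vec{y}) = x_2'$, so Proposition~\ref{p:moveIsolatedRightmost} applied to $\vec{y}$ (with $x_2'$ playing the role of $x_2$ in that proposition's statement) delivers the claimed conclusion of the proposition on $M_2(\vec{x}')$. In other words, the two moves are performed in the order ``push the middle agent first, then push the right agent,'' exploiting that Proposition~\ref{p:moveIsolatedLeft} preserves well-separation and Proposition~\ref{p:moveIsolatedRightmost} preserves the facility position under rightmost-agent motion.

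The main obstacle will be verifying that each instance encountered lies in the $(\sqrt{2}-\delta)$-stability regime required by the supporting propositions, since that stability factor is tight. The condition $x_2 < x_2' < (x_2+x_3)/2$ is used precisely here: it caps the displacement of the middle agent at half the isolated-pair gap, which is exactly what is needed so that the stability margin inherited from $\vec{x}$ survives into $\vec{y}$ (and hence through to $\vec{x}'$), and so that the nearest-point-in-image-set characterization of strategyproof allocations is applied to the correct endpoint at each step. A final technical point is that Propositions~\ref{p:moveIsolatedLeft} and~\ref{p:moveIsolatedRightmost} are formally stated for $2$-Facility Location on $3$ agents; inside the proof of Theorem~\ref{thm:lower_bound} (with $k \geq 3$), they are applied through the standard device of restricting attention to the isolated pair together with its nearest isolated neighbor in the well-separated structure, and the same device applies to the argument sketched above.
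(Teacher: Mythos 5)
Your proposal is correct and follows essentially the same route as the paper: introduce the intermediate instance $(\vec{x}_{-2}, x_2')$, apply Proposition~\ref{p:moveIsolatedLeft} to place the facility at $x_2'$, observe that the intermediate instance is still well-separated since $d(x_2',x_3)<d(x_2,x_3)$, and then apply Proposition~\ref{p:moveIsolatedRightmost} to move the rightmost agent to $x_3'$. (Note only that the conclusion should read $M_2(\vec{x}')=x_2'$, the new location of the middle agent, as your chain in fact delivers; the statement's ``$=x_2$'' is a typo.)
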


Note that, as for proposition \ref{p:moveIsolatedRightmost} the restriction that $\vec{x'}$ is also $\gamma$-stable  is equivalent to $d(x_1,x_2)>\gamma\cdot\rho d(x_2,x_3)$.

\begin{proof}
Since $x_2'\in [x_2,x_3]$ we have that $M_2(\vec{x_{-2}},x_2') = x_2'$, by proposition \ref{p:moveIsolatedLeft}. But since $d(x_2',x_3)<d(x_2,x_3)$, $(\vec{x_{-2}},x_2')$ is $(x_1|x_2',x_3)$-well separated. Hence, by proposition \ref{p:moveIsolatedRightmost}, for $(x_1|x_2',x_3')$-well separated instance $\vec{x'} = (\vec{x}_{-\{2,3\}},x_2',x_3')$, $M_2(\vec{x'})=x_2'$
\end{proof}

\begin{proposition}
Consider $(x_1|x_2,x_3)$-well separated stable instance $\vec{x}$ for which $M_2(\vec{x}) = x_2$. Then for every $(x_1|x_2',x_3')$-well separated instance $\vec{x'} = (\vec{x}_{-\{2,3\}},x_2',x_3')$, with $x_2\leq x_2'$, if $\vec{x'}$ is also well separated, $M_2(\vec{x'})=x_2$. 
\end{proposition}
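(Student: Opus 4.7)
The plan is to prove $M_2(\vec{x}')=x_2'$ (I interpret the ``$=x_2$'' in the statement as a typo, consistent with the typo in Proposition~\ref{p:moveIsolatedBothRight} whose proof in fact derives $M_2(\vec{x}')=x_2'$, and with the general Lemma~\ref{l:movingRight} for $k=2$). The strategy is to decompose the transition $\vec{x}=(x_1,x_2,x_3)\to(x_1,x_2',x_3')=\vec{x}'$ into a finite chain of intermediate well-separated instances, where each single transition is handled by one of the already established Propositions~\ref{p:moveIsolatedLeft} and \ref{p:moveIsolatedRightmost}. The cases split on whether $x_2'\leq x_3$ or $x_2'>x_3$.

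\textbf{Case A ($x_2'\leq x_3$):} A single two-step composition suffices. First invoke Proposition~\ref{p:moveIsolatedLeft} on $(x_1,x_2,x_3)$ to obtain $M_2((x_1,x_2',x_3))=x_2'$; this application is valid because $x_2\leq x_2'\leq x_3$, and the intermediate instance is $(x_1|x_2',x_3)$-well-separated since $d(x_2',x_3)\leq d(x_2,x_3)$ while $d(x_1,x_2')\geq d(x_1,x_2)>\rho\,d(x_2,x_3)\geq\rho\,d(x_2',x_3)$. Then invoke Proposition~\ref{p:moveIsolatedRightmost} on $(x_1,x_2',x_3)$ to move $x_3\to x_3'$; well-separation of the target $\vec{x}'$ is part of the hypothesis, so this preserves $M_2=x_2'$.

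\textbf{Case B ($x_2'>x_3$):} Here Proposition~\ref{p:moveIsolatedLeft} does not apply directly, because the required interval $[x_2,x_3]$ does not contain $x_2'$. The fix is to iterate. I build a sequence $\vec{x}^{(0)}=\vec{x},\vec{x}^{(1)},\ldots,\vec{x}^{(m)}$ of well-separated instances $\vec{x}^{(i)}=(x_1,y_i,z_i)$ with $M_2(\vec{x}^{(i)})=y_i$. At step $i$, first use Proposition~\ref{p:moveIsolatedRightmost} to move the rightmost coordinate from $z_i$ to some $z_i^\ast$ satisfying $z_i^\ast<y_i+d(x_1,y_i)/\rho$, so that $(x_1,y_i,z_i^\ast)$ is still $(x_1|y_i,z_i^\ast)$-well-separated; this preserves $M_2=y_i$. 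Then apply Proposition~\ref{p:moveIsolatedLeft} to advance $y_i$ to $y_{i+1}=\min\{x_2',z_i^\ast-\eta\}$ for a tiny slack $\eta>0$, obtaining $M_2=y_{i+1}$. Since $d(x_1,y_{i+1})\geq d(x_1,y_i)(1+1/\rho)-O(\eta)$, the sequence $d(x_1,y_i)$ grows geometrically until it reaches $d(x_1,x_2')$, so the process terminates after a finite number of steps $m$ with $y_m=x_2'$. A final application of Proposition~\ref{p:moveIsolatedRightmost} to $\vec{x}^{(m)}$ moves its right coordinate to $x_3'$ and produces $\vec{x}'$ with $M_2(\vec{x}')=x_2'$.

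The main technical obstacle is confirming that every intermediate instance satisfies the hypotheses needed to invoke the earlier propositions, namely that it is both well-separated and stable to the required degree. Well-separation is enforced by construction through the slack $\eta$, and because all intermediate instances retain $x_1$ at its original position while varying only the isolated pair, the required $(\sqrt{2}-\delta)$-stability follows automatically from well-separation provided $\rho\geq\sqrt{2}$; this is WLOG because we seek a contradiction for any bounded approximation ratio, so we may assume $\rho\geq\sqrt{2}$. The iteration in Case B is finite thanks to the geometric growth of $d(x_1,y_i)$, and each individual transition in the chain is precisely of the form addressed by Proposition~\ref{p:moveIsolatedLeft} or \ref{p:moveIsolatedRightmost}.
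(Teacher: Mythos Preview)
Your proof is correct and follows essentially the same strategy as the paper: reduce the transition from $(x_1,x_2,x_3)$ to $(x_1,x_2',x_3')$ to a finite chain of intermediate well-separated instances, each handled by one of Propositions~\ref{p:moveIsolatedLeft} or~\ref{p:moveIsolatedRightmost}. The paper packages the two-move combination as Proposition~\ref{p:moveIsolatedBothRight} and iterates it with a \emph{fixed} additive step $\delta=d(x_2,x_3)/2$ (keeping the isolated-pair width constant and shifting both coordinates by $\delta$ each round, for $\lceil d(x_2,x_2')/\delta\rceil$ rounds). You instead invoke the two primitive propositions directly, reversing their order and stretching the right coordinate close to the well-separation boundary before advancing the left one, which yields a geometrically growing step and hence logarithmically many rounds. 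Both schemes are valid; the paper's fixed-step induction is a little cleaner to state, while your geometric scheme makes finiteness more immediate but requires slightly more care (as you note) in keeping enough slack so that each intermediate instance remains both well-separated and $(\sqrt{2}-\delta)$-stable.
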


\begin{proof}
We will inductively use proposition \ref{p:moveIsolatedBothRight} to create instance $\vec{x}'$. Consider $d=d(x_2',x_2)$, $\delta = d(x_3,x_2)/2$ and $\kappa=\lceil{d/\delta}\rceil$. Then for every $\lambda = 1,2,3\ldots,\kappa$ consider instance $\vec{x}_\lambda = (\vec{x}_{-\{2,3\}},x_2+(\lambda-1)\delta,x_3+(\lambda-1)\delta)$. Now observe that $\vec{x}_\lambda$ is well separated since for it's rightmost pair, $x_2'=x_2+(\lambda-1)\delta$ and $x_3'=x_3+(\lambda-1)\delta$ it is $d(x_2',x_3')>2\delta$ while $d(x_1,x_2')>d(x_1,x_2)$. By iteratively applying proposition \ref{p:moveIsolatedBothRight} to $\vec{x_\lambda}$, we have that for every $(\vec{x}_{-\{2,3\}},y_2,y_3)$ well separated instance with $x_2+(\lambda-1)\delta\leq y_2\leq x_2+\lambda\delta$, $M_2(\vec{x}_{-\{2,3\}},y_2,y_3) = y_2$. For $\lambda=\kappa$ we get $M_2(\vec{x}_{-\{2,3\}},x_2',x_3') = x_2'$\,.
\end{proof}

\section{Missing Details from the Proof of Theorem~\ref{thm:lower_bound}: Case 2}
\label{s:app:lower_bound}

Next, we present a detailed proof of Case~2 in the proof of  Theorem~\ref{thm:lower_bound}.

\noindent\textbf{Case 2: $x_4 - l \leq \sqrt{2}\lambda$.}
Let $m = (r+l)/2$ be the midpoint of the $x_3$-hole $(l, r)$ in $I_3(\vec{x}_{-3})$. We consider the instance $\vec{y} = (\vec{x}_{-3}, a)$, where $a < m$ is arbitrarily close to $m$ (i.e., $a \lesssim m$) so that $a - l < r - a$ and $d(a, x_4) \lesssim \sqrt{2}\lambda/2$. The latter is possible since $x_3$ is already arbitrarily close to $x_4$ and the right endpoint $r$ of the hole $h = (l, r)$ lies in $(x_3, x_4]$. Since $d(x_1, x_2) = \lambda$, $d(x_2, a)$ is quite large, and $d(a, x_4) \lesssim \sqrt{2}\lambda/2$, the instance $\vec{y}$ is $(\sqrt{2}-\delta)$-stable, for any $\delta > 0$. By strategyproofness, $M(\vec{y})$ must place a facility at $l$, since $l \in I_3(\vec{x}_{-3})$ and $l$ is the nearest endpoint of the hole $h = (l, r)$ to $a$. 

As before, we now consider the instance $\vec{y}' = (\vec{y}_{-4}, l)$. Since $d(x_1, x_2) = \lambda$, $d(x_2, a)$ is quite large, and $d(a, l) < d(a, r) \leq \sqrt{2}\lambda/2$, the instance $\vec{y}'$ is $(\sqrt{2}-\delta)$-stable, for any $\delta > 0$. Hence, by strategyproofness, $M(\vec{y}')$ must keep a facility at $l$, because $l \in I_4(\vec{y}_{-4})$. 

To conclude the proof, we need to construct a $(x_1 | x_2 | l', l'+\eps)$-well-separated instance $\vec{z}$ with $l' \in M(\vec{z})$. Then, we can reach a contradiction to the hypothesis that $M$ has a bounded approximation ratio, by applying Lemma~\ref{l:movingLeft}, similarly to Case~1. 

To this end, we consider the image set $I_4(\vec{y}'_{-4})$ of agent $4$ in $\vec{y}'_{-4} = (x_1, x_2, a)$. Since $l \in M(\vec{y}')$, $l \in I_4(\vec{y}'_{-4})$. 
If $a-\eps \in I_4(\vec{y}'_{-4})$, the instance $\vec{z} = (\vec{y}'_{-4}, a-\eps)$ is $(x_1|x_2|a-\eps,a)$-well-separated (and thus, $(\sqrt{2}-\delta)$-stable, for any $\delta > 0$). Moreover, by strategyproofness, $M(\vec{z})$ must place a facility at $a-\eps$, because $a-\eps \in I_4(\vec{y}'_{-4})$. 
Otherwise, there must be a hole $h' = (l', r')$ in the image set $I_4(\vec{y}'_{-4})$, with $l' > l$ (because $l \in I_4(\vec{y}'_{-4})$) and $r' < a-\eps$ (because of the hypothesis that $a -\eps \not\in l \in I_4(\vec{y}'_{-4})$). We consider the instance $\vec{z}' = (\vec{y}'_{-4}, l'+\eps) = (x_1, x_2, l'+\eps, a)$. Since $l' + \eps \in (l, a)$, $d(a, l'+\eps) < d(a, l) < \sqrt{2}\lambda/2$ and the instance $\vec{z}'$ is $(\sqrt{2}-\delta)$-stable, for any $\delta > 0$. Therefore, by strategyproofness and since $l' \in I_4(\vec{y}'_{-4})$, $M(\vec{z}')$ must place a facility at $l'$. We now consider the instance $\vec{z} = (\vec{z}'_{-3}, l') = (x_1, x_2, l', l'+\eps)$, which is $(x_1 | x_2 | l', l'+\eps)$-well-separated (and thus, $(\sqrt{2}-\delta)$-stable, for any $\delta > 0$). Moreover, by strategyproofness and since $l' \in M(\vec{z}')$, and thus, $l' \in I_3(\vec{z}'_{-3})$, $M(\vec{z})$ must place a facility at $l'$. 

Therefore, starting from the $(\sqrt{2}-\delta)$-stable instance $\vec{y}'$, with $l \in M(\vec{y}')$, we can construct a $(x_1 | x_2 | l', l'+\eps)$-well-separated instance $\vec{z}$ with $l' \in M(\vec{z})$. 
Then, by Lemma~\ref{l:movingLeft}, $z_4 = l'+\eps \in M(\vec{z})$, because for  the $(x_1 | x_2 | x_3, x_4)$-well-separated instance $\vec{x}$,  $M_3(\vec{x}) = x_4$, and $\vec{z}$ is a $(x_1 | x_2 | l', l'+\eps)$-well-separated instance with $z_4 \leq x_4$. Since both $l', l'+\eps \in M(\vec{z})$, the social cost of $M(\vec{z})$ is arbitrarily larger than $\eps$, which is the optimal social cost of the $3$-Facility Location instance $\vec{z}$.\qed

\section{The Proof of Theorem~\ref{thm:random}}
\label{app:thm:random}

\begin{proof} 
The approximation guarantee easily follows from the fact that since a facility is uniformly at random placed over each optimal cluster, the expected cost of the sum of the cost of the agents in each cluster is 2 times their cost in the optimal clustering.

As is it always with our mechanisms, agent $x_i\in C_i$ cannot gain by moving within the range of $C_i$ (this would only increase her utility).

Since the analysis of \rand\ is so similar to the analysis of the mechanism in Section~\ref{s:rightmost}, we skip the detailed case analysis and mention only the key deviation cases that need be covered. Specifically these include:

\begin{description}
\item[Case 1:] why agent $x_i\in C_i$ cannot gain by becoming a member of another cluster, 
\item[Case 2:] or by becoming a self serving center 

\item[Case 3:] or by merging or splitting $C_i$.
\end{description}

Without loss of generality, consider the deviating agent to be the edge agent $x_{i,l}\in C_i$, declaring location $x'$ creating instance $\vec{x}'=(\vec{x}_{-x_{i,l}},x')$ with optimal clustering $\vec{C}'$. If our results stand for her, they easily transfer to all agents in $C_i$. $C_i$ contains $n$ agents, including $x_{i,l}$. For simplicity, without loss of generality we index these agents from left to right, excluding $x_{i,l}$\,, such as $x_{i,l}\leq x_{i,1}\leq \cdots \leq x_{i,n-1}$\,, where $x_{i,1}=x_{i,l+1}$ and $x_{i,n-1} = x_{i,r}$\,. Now for simplicity, we represent $d(x_{i,l}, x_{i,j})$ by $d_{i,j}$. Of course $d_{i,l}=0$. We define as $X_i$ the discrete random variable that takes values from sample space $\{d_{i,l},d_{i,1}, d_{i,2}, \ldots, d_{i,n-1}\}$ uniformly at random. That is, $X_i$ represents the cost agent $x_{i,l}$ experiences if she is served by the facility placed in $C_i$ by the mechanism. Then, the expected cost of $x_{i,l}$ should she not deviate is:

\begin{eqnarray*}
\Exp(X_i) = \frac{0+d_{i,1}+ \ldots + d_{i,n-1}}{n}
\end{eqnarray*}

That is, since for any agent $x_j \notin C_i$, $d(x_j,x_{i,l})>D(C_i)=d_{i,n-1}$ by Lemma~\ref{l:separated-clusters}.

Now, for Case~1, ``why agent $x\in C_i$ cannot gain by becoming a member of another cluster''. Notice that this is the case where agents in $C_i$ are not merged or splitted in $\vec{C}'$. With some abuse of notation, this allows us to refer to the cluster containing agents in $C_i\setminus x_{i,l}$ in $\vec{C}'$ of $\vec{x}$ as $C_i'$. $C_{i-1}'$ then is the set of agents belonging to the cluster immediately to the left of $C_i'$ (i.e. the rightmost agent of $C_{i-1}'$, excluding $x'$, is $x_{i-1,r}$). Consider a deviation $x'$ that places the deviating agent in cluster $C_{i-1}'$ after step 1 of the mechanism. Again for simplicity consider $d(x', x_{i,l}) = c$ and we index agents in $C_{i-1}$ inversely, such that $x_{i-1,\hat{1}}\geq x_{i-1,\hat{2}}\geq \ldots \geq x_{i-1,\hat{n'}}$  (meaning that now $x_{i-1,r}=x_{i-1,\hat{1}}$, $x_{i-1,r-1} = x_{i-1,\hat{2}}$ etc.) where $|C_{i-1}| = n'$. Equivalently we set $d(x_{i,l}, x_{i-1,\hat{j}}) = d_{i-1,j}$. By Corollary~\ref{c:separated-clusters}, we have $d_{i,1}\leq d_{i,2}\leq \cdots\leq d_{i,n-1}\leq d_{i-1,1}\leq \cdots \leq d_{i-1,n'}$. Now we define uniform random variable $X_i'$ with sample space $\{d_{i,1},\ldots,d_{i,n-1}\}$ (see that $d_{i,l}$ is now absent) and random variable $X_{i-1}'$ with sample space $\{c, d_{i-1,1},\ldots,d_{i-1,n'}\}$. Now $X_i'$ represents the cost of $x_{i,l}$ should she be served by the facility placed in $C_i'$ of the changed instance (which now doesn't include her) and $X_{i-1}'$ her cost should she be served by the facility placed at $C_{i-1}'$ (which now includes her false declared location).
The expected cost of $x_{i,l}$ now becomes $\Exp(\min\{X_i',X_{i-1}'\})$. 

But, since $d_{i,1}\leq d_{i,2}\leq \ldots\leq d_{i,n-1}\leq d_{i-1,1}\leq \ldots \leq d_{i-1,n'}$, unless $d(x',x_{i,l})<d_{i,n-1} = D(C_i)$, we have that:
\begin{eqnarray*}
\Exp(\min\{X_i',X_{i-1}'\}) = \Exp(X_i') = \frac{d_{i,1}+ \ldots + d_{i,n-1}}{n-1} > \Exp(X_i)
\end{eqnarray*}
That means that $x_{i,l}$ cannot gain by this deviation unless $x'$ both belongs in $C_{i-1}'$ and $d(x',x_{i,l})< D(C_i)$. All we need to show now is that any such situation would result in a violation of the inter-cluster distance between $C_{i-1}'$ and $C_i'$ or between $C_{i-1}'$ and $C_{i-2}'$, guaranteed by the cluster-separation property and hence it would be caught by the mechanism's cluster-separation property verification step.

Specifically consider the distance of $x_{i,l}$ to her center $c_i$ of $C_i$ in the optimal clustering. We know that it must be $d(C_{i-1}, C_{i})\geq D(C_i)\cdot 1.6$, by Lemma~\ref{l:separated-clusters}, for the given stability factor of 5. But in order for this distance to be tight, it must be that $d(x_{i,l}, c_i) = 0.4\cdot \Diam(C_i)$ (see factor $c$ of proof of Lemma~\ref{l:separated-clusters} -due to stability properties, if $d(x_{i,l}, c_i) < 0.4\cdot \Diam(C_i)$ or $> 0.4\cdot \Diam(C_i)$, $d(C_{i-1}, C_{i})$ grows larger than  $D(C_i)\cdot 1.6$). Furthermore, in order for this distance to be tight, it must also be $d(c_{i-1},x_{i-1,r})<0.4\cdot \Diam(C_i)$ (since by stability it must be $d(C_{i-1}, C_{i}) = d(x_{i-1,r}, x_{i,l})>(\gamma -1)d(x_{i-1,r}, c_{i-1})$). 

Now, since it must be $d(x',x_{i,l})<\Diam(C_i)$ it will be $d(x',c_i)<1.4\Diam(C_i)$ and $d(x',x_{i-1,r})>0.6\Diam(C_i)$ (since $d(C_i,C_{i-1})>1.6\Diam(C_i)$ by Lemma~\ref{l:separated-clusters}). Finally we distinguish between two cases: 
\begin{description}
\item[Case 1:] $c_{i-1}\in C_{i-1}'$. Now notice that $d(x_{i,l},c_{i-1})>5d(x_{i,l},c_i)$ so $d(x_{i,l},c_{i-1})>2\Diam(C_i)$. Then $\Diam(C_{i-1}')\geq d(c_{i-1},x')> \Diam(C_i)$ (since $d(x',x_{i,l})<\Diam(C_i)$). But then $d(C_{i-1}', C_i')\leq d(x',c_i)\leq 1.4\cdot\Diam(C_i)$ which means that the cluster separation verification property of step 2 would be violated.
\item[Case 2:] $c_{i-1}\notin C_{i-1}'$. Then, in this edge case we notice it would be $d(C_{i-1}',C_{i-2}')\leq d(c_{i-1},x_{i-1,r})\leq 0.4\Diam(C_i)$. But $\Diam(C_{i-1}')\geq d(x_{i-1,r},x')\geq 0.6\Diam(C_i)$. Hence the verification property of step 2 is again violated between $C_{i-1}'$ and $C_{i-2}'$.
\end{description}

All we have to do to finish, is note that as $c_i$ moves to the right or to the left, $d(C_{i-1}, C_i)$ grows by a multiplicative factor $\gamma -1$ (=4) of $d(x_{i,l},c_i)$ (see proof of Leamma~\ref{l:separated-clusters}) and $d(x_{i,l},c_{i-1})$ by a multiplicative factor of $5$ (remember, it must be both $d(x_{i,l},c_{i-1})>5d(x_{i,l},c_i)$ and $d(x_{i,r},c_{i-1})>5d(x_{i,r},c_i)$). Which means that the above inequalities will still hold. \footnote{Notice here that while this property was a must-have for \rightm\ to work i.e. the mechanism wouldn't work if $x'$ both belongs in $C_{i-1}'$ and $d(x',x_{i,l})< D(C_i)$, here this might not the case. We can easily see this guarantees strategyproofness, but it might not be necessary which means the mechanism may work for smaller stability factors.}

For Case~2, why agent $x_{i,l}$ cannot gain by becoming a self serving cluster, we simply notice the following: her cost, should she not deviate, is at most $D(C_i)$ (see expected value from previous case). But, from Lemma~\ref{l:singleton_deviation_5} we know that $x_{i,l}$ must deviate by at-least $\geq D(C_i)$, for a stability factor of 5. So she cannot gain from this deviation\footnote{Again, while this property guarantees strategyproofness, it might not be necessary for example, we see that in one of the bad edge cases, where all agents of $C_i$ are gathered on $x_{i,r}$, with $c_i=x_{i,r}$ a stability of 3 would suffice to guarantee that $x_{i,l}$ needs to deviate by at-least $D(C_i)$ to become a self-serving cluster.}.

For Case~3, it is not hard to see that by merging all the agents in $C_i$ with agents $\notin C_i$, her expected cost can only increase. Furthermore, splitting the agents in $C_i$ would cause the cluster-separation property  verification step to identify the split (see the proof of the strategyproofness of the \rightm\ mechanism, in  Section~\ref{s:rightmost}) and remove all agents of $C_i$ from the game.
\end{proof}

\end{document}